\renewcommand\footnotetextcopyrightpermission[1]{}
\newcommand{\dsig}[2]{\ensuremath{\textsc{dsig}_{#2}(#1)}}
\title[Trees and Turtles: Modular Abstractions for State Machine Replication Protocols]{Trees and Turtles: Modular Abstractions for State Machine Replication Protocols}
\author{Natalie Neamtu}
\affiliation{
    \institution{Microsoft Corporation}
    \city{Redmond}
    \state{WA} 
    \country{USA}}
\email{nan55@cornell.edu}
\author{Haobin Ni}
\affiliation{
    \institution{Cornell University}
    \city{Ithaca}
    \state{NY} 
    \country{USA}}
\email{haobin@cs.cornell.edu}
\author{Robbert van Renesse}
\affiliation{
    \institution{Cornell University}
    \city{Ithaca}
    \state{NY} 
    \country{USA}}
\email{rvr@cs.cornell.edu}
\begin{abstract}
We present two abstractions for 
designing modular state machine replication (SMR) protocols: 
\emph{trees} and \emph{turtles}.  A tree captures
the set of possible state machine histories, 
while a turtle represents
a subprotocol that tries to find agreement in this tree.
We showcase the applicability of these abstractions by
constructing crash-tolerant SMR protocols out 
of abstract tree turtles
and providing examples of tree turtle implementations.
Tree turtles can also be extended to be made Byzantine fault-tolerant (BFT).
The modularity of tree turtles
allows a generic approach for adding a leader for liveness.
We expect that these abstractions will simplify reasoning
and formal verification of SMR protocols as well as facilitate innovation in protocol designs.
\end{abstract}
\begin{document}

\maketitle

\section{Introduction}

State machine replication (SMR) is a widely-used paradigm 
in distributed and decentralized services, 
wherein a set of processors provides an abstraction of 
a single state machine with an ever-growing history~\cite{S90}. 
In the face of possible processor failures and unbounded communication delays, 
the challenge lies in ensuring that nodes always agree on 
the history 
while allowing updates to be made in as timely a manner as possible. 

Traditional SMR protocols are usually constructed around
the notion of an unbounded sequence of slots. 
The goal of such a protocol is to fill the slots with values.
A typical protocol consists of an unbounded series of
rounds where the contents of
at most one slot may be decided in a round. 
Some protocols are only able to fill a single slot
(i.e., \cite{Ben83,DLS88}),
thus an unbounded number of
instances of the protocol must be used to implement SMR.
Higher throughput is achieved by running multiple instances
in parallel, either independently~\cite{L98,CL99} or
in a pipelined fashion~\cite{ZAB,HotStuff}, or by putting a batch of 
values in each slot.

We present here two abstractions that break this slot-by-slot paradigm: 
\emph{trees} and \emph{turtles}.
Referring to a sequence of values as a \emph{chain},
the set of such chains forms a tree under the \emph{is-a-prefix-of} relation.
As was proposed in~\cite{GENPAXOS}, we generalize the slot-based scheme 
for constructing SMR protocols to
allow entire chains to be decided at once, 
extending the state machine history down a path through the tree.
We then generalize the notion of a round to
an abstract subprotocol which can be used to decide one chain.
We refer to such protocols as turtles because
they are stacked \emph{in infinitum} to construct an SMR protocol.
Taken together, the result is protocols called \emph{tree turtles}.

Using chains requires our subprotocols to form a consensus out of a set with 
richer algebraic structures than the traditional set of singlar values.
In this case, the algebraic structure we use is the meet-semilattice formed
by the ancestor relation between the nodes of the tree as the partial order and
the lowest common ancestor of a set of nodes as the meet operator.
This structure is utilized in the tree turtle protocols 
and their proofs of correctness that we present 
in this paper.

We expect that our abstractions can lead to various advantages
over traditional SMR approaches.
Proposing chains allows processors to specify preferred orderings of values
in the state machine history.
Reasoning about SMR protocols is made simpler with tree turtles
because the never-terminating execution is factored out;
this can lead to more reusable proofs in both an informal an formal setting.
Tree turtles themselves have simple proofs
when compared to existing SMR protocols.
The modularity of tree turtles also enables the design of heterogeneous
protocols that dynamically adapt to their workloads or operating
conditions.

For liveness, we show how to compose our turtle abstraction 
with a \emph{leader} which attempts to eliminate contention between processors.
Different from traditional approaches,
our protocol does not require a non-faulty leader to make progress
under favorable conditions. 
We also demonstrate that tree turtles can be made Byzantine fault-tolerant (BFT)
by adding an abstraction called \emph{evidence} 
which allows protocols to use additional
techniques for restricting Byzantine behavior (such as digital signatures).

\section{Trees and Turtles}

Put simply, the goal of SMR is to allow 
a set of processors agree on an ever-growing sequence of values
in a fault-tolerant manner.
Rather than focusing on the individual values in a sequence,
we will consider how an agreement can be formed on an entire sequence,
or a \emph{chain}, at once. Doing so utilizes the tree structure present
in sets of chains.

Consider a processor $p$ that believes the state machine history
is represented by a chain $c$. If another processor $p'$ 
believes the state machine is represented by a different chain $c'$, 
then $p$ and $p'$ should be able to agree on a common history.
Beyond the simplest case where $c = c'$, 
consider whether one chain is a \emph{prefix} of the other.
If this is the case, the processor who proposed the shorter chain 
could later ``catch up" by extending its chain to the longer chain,
without needing to modify the earlier state machine history.

If $p$ and $p'$ attempt to establish agreement on
an extension of the state machine history by
proposing chains $c$ and $c'$ to each other, then 
we could consider that $p$ and $p'$ agree on the \emph{longest common prefix}
of $c$ and $c'$. This is the (possibly empty)
longest identical subsequence that can be found 
starting from the beginning of each chain. 
The longest common prefix can readily be generalized to any number of chains; 
Figure \ref{fig:trees} depicts a tree formed by three chains
where the longest common prefixes are ancestor nodes.
\begin{figure}[t]
\centering
\begin{tikzpicture}
\node at (0,0) {$\bot$} [grow = east]
    child { node { $d$ } 
        child { node { $c_3$ } }
        child { node { $u$ }
            child {
                child { node { $c_2$ } }
            }
            child { node { $c_1$ } }
        }
    };
\end{tikzpicture} 
\vspace{-1mm}
\caption{Illustrating the tree structure formed by trees and the is-a-prefix-of relation. Chain $u$ is the longest common prefix of chains $c_1$ and $c_2$. Chain $d$ is the longest common prefix of chains $c_1$, $c_2$, and $c_3$.
}
\vspace{-1mm}
\label{fig:trees}
\end{figure}
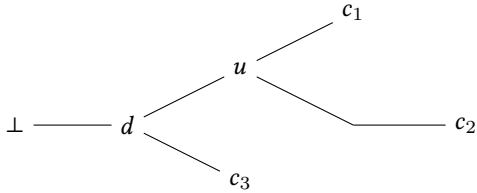

Trees\footnote{
    In addition to being like trees in a loose conceptual sense, 
    chains, taken with the partial order $\preceq$,
    satisfy the set-theoretic definition of a tree.
    This is because for each chain $c$, the set of its prefixes 
    $\{ c' ~|~ c' \preceq c \}$ is totally ordered by $\preceq$.
    As mentioned in the introduction, it is possible to generalize 
    trees even further to a semilattice.}, then, can be viewed as fundamental to state machine replication:
processors continually propose new chains to each other 
in order to keep extending the longest agreed-upon path through the tree.
Table \ref{tab:notation} summarizes the notation which we will
use for chains in this paper.

\begin{table}[h]
\caption {Summary of notation for chains.} \label{tab:notation} 
\begin{center}
\begin{tabular}{c|l}
\hline
$\bot$ & the empty chain \\ \hline
$c \preceq c'$ & $c$ is a \emph{prefix} of $c'$ (or, $c'$ is an \emph{extension} of $c$) \\ \hline
$c \simeq c'$ & $c \preceq c' \text{ or } c' \preceq c$ (we say that $c$ and $c'$ \emph{agree}) \\ \hline
$c \sqcap c'$ & the longest common prefix of $c$ and $c'$ \\ \hline
\end{tabular}
\end{center}
\end{table}

Now we will turn our attention to the structure of an SMR protocol. 
To perform SMR, 
processors alternate between proposing new extensions to the longest
agreed-upon chain, collecting proposals from other processors,
and deciding what the new longest agreed-upon chain is. 
This process repeats forever to ensure that the longest
agreed-upon chain is ever-growing. 

Thus, we may extract a natural building block from this structure: 
a subprotocol in which processors propose 
and subsequently decide on a single chain. 
While many subprotocol implementations
are possible, we establish a common specification
for the properties they should have.
Once we have one (or many) such subprotocols, 
constructing a SMR protocol
is simple: we stack these subprotocols on top of each other in an unending sequence.
Owing to this infinite repetition, and inspired by the saying
\emph{``turtles all the way down,''} we refer to our building blocks of
SMR protocols as \emph{turtles}.
Since our protocols combine the ideas of trees and turtles, 
we will call them \emph{tree turtles} (Figure~\ref{fig:turtles}).

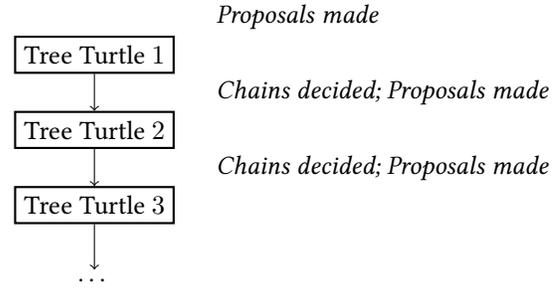
\begin{figure}[t]
\centering
\begin{tikzpicture}[
roundnode/.style={rectangle, draw=black, fill=white, thick},
ionode/.style={rectangle, fill=white},
factnode/.style={rectangle}
]
\node[roundnode]    (turtle1) at (0, 0)   {Tree Turtle $1$};
\node[roundnode]    (turtle2) at (0, -1)   {Tree Turtle $2$};
\node[roundnode]    (turtle3) at (0, -2)   {Tree Turtle $3$};
\node (dummy) at (0, -3) {$\dots$};

\node[factnode, anchor=west] (text1) at (1.5, 0.5) {\textit{Proposals made}};
\node[factnode, anchor=west] (text2) at (1.5, -0.5) {\textit{Chains decided; Proposals made}};
\node[factnode, anchor=west] (text2) at (1.5, -1.5) {\textit{Chains decided; Proposals made}};

\draw[->] (turtle1.south) -- (turtle2.north);
\draw[->] (turtle2.south) -- (turtle3.north);
\draw[->] (turtle3.south) -- (dummy.north);
\end{tikzpicture} 
\vspace{-2mm}
\caption{Tree turtle protocols are stacked to implement a state machine replication protocol.}
\label{fig:turtles}
\vspace{-2mm}
\end{figure}

Tree turtles reduce the problem of solving SMR into making a
single proposal and decision. The specification for tree turtles 
requires them to always terminate (Section~\ref{sec:tt_spec}), while  
SMR requires that the protocol goes on forever (Section~\ref{sec:smr_spec}).
The tree turtle implementations are encapsulated from the correctness of
the SMR protocol proved in Section~\ref{sec:tt_to_smr} making those proofs
reusable across implementations. This also readily demonstrates the
potential of heterogeneous SMR protocols composed from multiple types
of tree turtles.

\section{System Model} \label{sec:sys_model}

We will denote the set of processors participating in a protocol as $\mathcal{P}$. 
Processors can communicate with all other processors by sending messages over a network.

\subsubsection*{Failures.} 
Each processor $p$ can either be correct or faulty. 
Faulty processors may fail by crashing, at which point they may stop 
executing indefinitely.
Which processors are faulty is not known \emph{a priori}. 

\subsubsection*{Network.} 
The network is assumed to be \emph{reliable} in the following sense: 
if a processor $p$ sends a message to a correct processor $q$, 
then $q$ eventually receives that message.
We also assume that the network doesn't forge or garble messages,
meaning that if a processor $q$ receives a message 
from a processor $p$, then $p$ actually sent that message.

\subsubsection*{Asynchrony.} 
We assume that there are no upper bounds on the difference between 
processor speeds or network latency. 
This means that processors are not able to distinguish between 
faulty processors and correct processors that are simply slow 
or whose messages have yet to be delivered.

\subsubsection*{Quorums.} 
A processor can never expect to receive a message
from all other processors due to the possibility that some have crashed. 
To address this, a processor instead waits for a \emph{quorum} 
which is a subset of the processors in $\mathcal{P}$.
A \emph{quorum system} $\mathcal{Q}$ is a set of quorums
($\mathcal{Q} \subseteq 2^\mathcal{P}$) which satisfies
the following: 

\begin{itemize}
    \item We assume that any quorum system $\mathcal{Q}$ 
    contains a special quorum, which we will denote $Q^*$,
    that consists entirely of correct processors. 
    \item We say that a quorum system satisfies $k$-\emph{intersection}
    if any set of $k$ quorums have at least one processor in common~\cite{JM05}.
\end{itemize}

Appendix~A shows how $k-$intersecting quorum systems can be
implemented using threshold quorums if $|\mathcal{P}| > k \cdot f$,
where $f$ is the maximum number of faulty processors.

\section{State Machine Replication Specification} \label{sec:smr_spec}

Below we give the requirements for a SMR protocol,
in which processors in $\mathcal{P}$ alternate \emph{proposing} and \emph{deciding} chains.

\begin{definition}[State Machine Replication Specification] \label{def:smr_spec}
\par\noindent $ $
\begin{itemize}[leftmargin=1.25em]
\item \emph{SMR-Agreement}:
if a processor $p$ decides chain $d$ and a processor $p'$
decides chain $d'$, then $d \simeq d'$ (even if $p = p'$).
\item \emph{SMR-Validity}:
if a processor decides chain $d$, 
then some processor proposed a chain $c$ such that $d \preceq c$.
\item \emph{SMR-Relay}:
if a correct processor decides chain $d$, then eventually all correct processors
decide $d$ or an extension of $d$.
\item \emph{SMR-Monotonicity}:
if a processor decides chain $d$ and later another chain $d'$, then $d \preceq d'$.
\item \emph{SMR-Progress}: 
if a correct processor proposes a chain that it has not decided before, then
that processor eventually decides a chain it has not decided before.
\end{itemize}
\end{definition}
Note that it is impossible for a protocol to satisfy 
all properties if the system is asynchronous~\cite{FLP83}.
For our tree turtle protocols in Section~\ref{sec:tt_impl}, 
we will focus on the first four properties. 
We will return to SMR-Progress in Section~\ref{sec:prac}
when we introduce synchrony assumptions.

\section{Tree Turtle Specification} \label{sec:tt_spec}

A tree turtle is a fault-tolerant protocol executed by the
processors in $\mathcal{P}$.
Processors construct an input and produce an output
for each tree turtle.
An input to a tree turtle is a chain $c$, which we
will denote in brackets $\langle c \rangle$.
An output of a tree turtle is a pair of chains
$\langle d, u \rangle$. 
Since tree turtles are subprotocols, 
we do not assume here that all processors---including non-crashed ones---will
 execute a given tree turtle.
We show in Section~\ref*{sec:tt_to_smr} that in
our construction, all correct processors will in fact
execute each tree turtle.

A tree turtle must satisfy the following properties:
\begin{definition}[Tree Turtle Specification] \label{def:tt_spec}
\par\noindent $ $
\begin{itemize}[leftmargin=1.25em]
\item \emph{Turtle-Termination}: 
if each correct processor constructs an input, 
then eventually each correct processor produces an output.
\item \emph{Turtle-Agreement}: 
for any two outputs $\langle d, u \rangle$ and 
$\langle d', u' \rangle$,
$d \preceq u'$ and $d' \preceq u$.
\item \emph{Turtle-Unanimity}:
for any chain $w$, 
if $w \preceq c$ for all inputs $\langle c \rangle$,
then $w \preceq d$ for all outputs $\langle d, u \rangle$.  
\item \emph{Turtle-Validity}:
if some processor produces an 
output $\langle d, u \rangle$, then some processor must
have produced an input $\langle c \rangle$ such that $u \preceq c$.
\end{itemize}
\end{definition}

Turtle-Agreement ensures agreement between any two outputs of a turtle:
note that if both $d \preceq u$ and $d' \preceq u$, then either 
$d \preceq d'$ or $d' \preceq d$. 
Further, the case where $\langle d, u \rangle = \langle d', u' \rangle$
implies that $d \preceq u$ for each output.
Turtle-Unanimity ensures that if there is an agreement 
in the inputs (i.e., proposals) to a turtle, 
then that agreement is reflected in the outputs. 
Turtle-Validity means that each output of a turtle be a prefix of 
an input to that turtle.
Unlike SMR protocols, we are able to guarantee the liveness properties
for tree turtle protocols (Turtle-Termination).

\section{Tree Turtles All The Way Down} \label{sec:tt_to_smr}

Tree turtle protocols can be composed with each other
to implement a protocol
satisfying SMR-Agreement, SMR-Validity, SMR-Relay, and SMR-Monotonicity.

The construction works as follows.
All processors in $\mathcal{P}$ are configured with instructions to execute
the same unbounded sequence of tree turtles numbered $1, 2, 3, \dots$, and so on.
For convenience, we will extend the tree turtle inputs and outputs $\langle i, c \rangle$ and $\langle i, d, u \rangle$ to now include the tree turtle number~$i$.
We assume that each processor is initialized with
the tuple $\langle 0, \bot, \bot \rangle$
as the output for the non-existent tree turtle 0.
Then the processors begin executing the tree turtles in succession.
If a processor produces tree turtle output $\langle i, d, u \rangle$, 
that processor decides the chain $d$. 
It then proposes a new chain $c$,
selecting $c$ such that $u \preceq c$,
and constructs an input 
$\langle i+1, c \rangle$ for the next tree turtle $i+1$. 

By choice of $u \preceq c$, this construction ensures the following:

\begin{lemma} \label{lemma:smr_safety_lemma}
If a processor produces the output $\langle i, d, u \rangle$ for tree turtle $i$,
then for all inputs to subsequent tree turtles $\langle j, c \rangle$
where $i < j$, it must be that $d \preceq c$.
\end{lemma}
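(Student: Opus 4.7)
The plan is to prove this by induction on $j \geq i+1$, using the specification of tree turtle $i$ for the base case and the specifications of the intermediate tree turtles for the inductive step. The construction tells us exactly how inputs are formed: any input $\langle j+1, c \rangle$ proposed by some processor $p'$ is derived from an output $\langle j, d'', u'' \rangle$ that $p'$ produced, with the requirement that $u'' \preceq c$. So the task reduces to tracking how $d$ is bounded below $u''$ at each successive turtle.

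For the base case $j = i+1$: any input $\langle i+1, c \rangle$ comes from a processor that produced some output $\langle i, d', u' \rangle$ of turtle $i$ and chose $c$ with $u' \preceq c$. By Turtle-Agreement applied to the outputs $\langle d, u \rangle$ and $\langle d', u' \rangle$, we get $d \preceq u'$, and transitivity gives $d \preceq c$.

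For the inductive step, assume the claim holds at $j$, i.e., $d \preceq c$ for every input $\langle j, c \rangle$ to turtle $j$. Apply Turtle-Unanimity to turtle $j$ with $w = d$: since $d$ is a prefix of every input, it is a prefix of the first component of every output, so $d \preceq d''$ for every output $\langle j, d'', u'' \rangle$ of turtle $j$. To bridge to $u''$, use the self-application of Turtle-Agreement (the case $\langle d'', u'' \rangle = \langle d', u' \rangle$), which yields $d'' \preceq u''$. Therefore $d \preceq u''$. Any input $\langle j+1, c \rangle$ to turtle $j+1$ was produced from some output $\langle j, d'', u'' \rangle$ with $u'' \preceq c$, so $d \preceq u'' \preceq c$, completing the induction.

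The main obstacle I expect is the inductive step, specifically the observation that Turtle-Unanimity on its own only relates $d$ to the \emph{first} components $d''$ of outputs, while the construction constrains the next input via the \emph{second} components $u''$. The key trick is to notice that Turtle-Agreement, when instantiated with the same output on both sides, forces $d'' \preceq u''$ internally within each output; combining this with Turtle-Unanimity is what closes the gap. The rest is routine chasing of the $\preceq$ relation through the construction rule $u \preceq c$.
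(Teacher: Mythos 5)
Your proof is correct and follows essentially the same route as the paper's: induction on the turtle index, with the base case closed by Turtle-Agreement plus the construction rule $u \preceq c$, and the inductive step closed by feeding the inductive hypothesis into Turtle-Unanimity and then using the self-instance of Turtle-Agreement ($d'' \preceq u''$) to reach the next input. If anything, your write-up is slightly more careful than the paper's in the base case, where you correctly distinguish the given output $\langle i,d,u\rangle$ from the (possibly different) output of the processor constructing the input.
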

\begin{proof}
Suppose $\langle j, c \rangle$ is an input to tree turtle $j$
made by a processor $p$.
Proceed by induction on $j - i$.
If $j = i + 1$, then by the above construction,
$p$ must have produced $\langle i, d, u \rangle$ 
as the output of tree turtle $i$ for some
chain $u \preceq c$. By Turtle-Agreement, we also know that $d \preceq u$
which implies $d \preceq c$. 
In the inductive case, we again know
that $p$ must have produced an output $\langle j-1, d', u' \rangle$ 
with $d' \preceq u' \preceq c$ in the previous tree turtle.
The inductive hypothesis gives us that $d \preceq c'$ for all
inputs to tree turtle $j-1$ $\langle j-1, c' \rangle$. 
Thus, the condition for Turtle-Unanimity is satisfied
for tree turtle $j-1$, and so it must be that $d \preceq d' \preceq c$.
\end{proof}

Further, we can use an inductive argument 
to see that each correct processor will eventually complete each tree turtle.

\begin{lemma}\label{lemma:smr_liveness_lemma}
Every correct processor eventually produces an output for each tree turtle.
\end{lemma}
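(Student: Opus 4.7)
The plan is to prove this by straightforward induction on the turtle index $i$, leveraging Turtle-Termination together with the deterministic way the construction feeds each turtle's output into the next turtle's input.

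For the base case ($i = 0$), there is nothing to show: by hypothesis, every processor, correct or not, is initialized with the output $\langle 0, \bot, \bot \rangle$ for the nonexistent turtle $0$.

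For the inductive step, I would assume that every correct processor eventually produces an output $\langle i{-}1, d, u \rangle$ for turtle $i{-}1$. By the construction in this section, upon producing this output a correct processor immediately selects some chain $c$ with $u \preceq c$ (this is always possible, since at minimum the processor can take $c = u$) and constructs the input $\langle i, c \rangle$ for turtle $i$. Hence every correct processor eventually constructs an input for turtle $i$. Applying Turtle-Termination to turtle $i$ then yields that every correct processor eventually produces an output $\langle i, d', u' \rangle$, completing the induction.

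I do not expect a serious obstacle here; the argument is essentially bookkeeping once the composition of turtles is taken seriously. The only subtlety worth stating explicitly is the minor observation that a correct processor can always find a valid proposal $c$ extending $u$, so nothing blocks the transition from ``has output for turtle $i{-}1$'' to ``has input for turtle $i$.'' Once that is dispatched, Turtle-Termination does all the real work, and the induction carries through cleanly without any interaction with the Agreement, Unanimity, or Validity clauses of the turtle specification.
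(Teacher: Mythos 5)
Your proof is correct and follows essentially the same route as the paper's: induction over the turtle index, with Turtle-Termination doing the work in each step and the construction's rule $u \preceq c$ guaranteeing that an output for turtle $i-1$ yields an input for turtle $i$. The only cosmetic difference is that you anchor the base case at the dummy output $\langle 0, \bot, \bot \rangle$, whereas the paper starts by observing directly that every non-crashed processor constructs an input to turtle $1$.
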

\begin{proof}
Trivially, each non-crashed processor constructs an input to tree turtle 1. 
Thus, Turtle-Termination ensures that the base case of the induction is satisfied.
In the inductive case, each non-crashed processor
will use its output of tree turtle $i$ to construct an input to tree turtle $i+1$,
and so an analogous argument applies.
\end{proof}

Now we proceed to the main proof:

\begin{theorem}
The composition of tree turtles implements a protocol satisfying SMR-Agreement, SMR-Validity, SMR-Relay, and SMR-Monotonicity.
\end{theorem}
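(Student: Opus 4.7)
The plan is to verify each of the four SMR properties separately, treating every SMR-level decision as a turtle-level output via the construction. Each decision $d$ made by a processor $p$ arises from some output $\langle i, d, u \rangle$, and applying Turtle-Agreement to this output paired with itself gives $d \preceq u$---a fact I will use throughout. The two technical lemmas already established, Lemma~\ref{lemma:smr_safety_lemma} (safety propagation across turtles) and Lemma~\ref{lemma:smr_liveness_lemma} (per-turtle progress), together with Turtle-Agreement, Turtle-Unanimity, and Turtle-Validity do essentially all the remaining work.

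I would dispatch SMR-Validity and SMR-Agreement first. For SMR-Validity, given a decision $\langle i, d, u \rangle$, Turtle-Validity yields a processor input $\langle i, c \rangle$ with $u \preceq c$, so $d \preceq u \preceq c$ and $c$ was by construction proposed. For SMR-Agreement, consider decisions $\langle i, d, u \rangle$ and $\langle j, d', u' \rangle$. If $i = j$, then Turtle-Agreement gives $d \preceq u'$, and applying it to the second output with itself gives $d' \preceq u'$; since the prefixes of $u'$ are totally ordered by $\preceq$, $d$ and $d'$ agree. If instead $i < j$ (without loss of generality), Lemma~\ref{lemma:smr_safety_lemma} gives $d \preceq c$ for every input to turtle $j$, so Turtle-Unanimity applied to turtle $j$ delivers $d \preceq d'$.

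SMR-Monotonicity then reduces immediately to the $i < j$ case of SMR-Agreement, since the same processor advances through strictly successive turtle indices. For SMR-Relay, I would assume a correct processor decides $d$ via turtle $i$, invoke Lemma~\ref{lemma:smr_safety_lemma} to show that every input to turtle $i+1$ has $d$ as a prefix, apply Turtle-Unanimity to conclude that every output $\langle i+1, d'', u'' \rangle$ satisfies $d \preceq d''$, and then use Lemma~\ref{lemma:smr_liveness_lemma} to guarantee that every correct processor produces such an output and therefore decides some extension of $d$. The main subtlety I anticipate lies precisely in this coordination for SMR-Relay: it is the only property that must simultaneously combine cross-turtle safety, within-turtle unanimity, and cross-turtle liveness, whereas the other three properties follow almost mechanically once the safety lemma and Turtle-Agreement are in hand.
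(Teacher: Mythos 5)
Your proposal is correct and follows essentially the same route as the paper: both split SMR-Agreement into the $i=j$ and $i<j$ cases using Turtle-Agreement and Lemma~\ref{lemma:smr_safety_lemma} with Turtle-Unanimity, derive SMR-Validity from Turtle-Validity plus $d \preceq u$, and obtain SMR-Relay by combining the safety lemma, Turtle-Unanimity, and Lemma~\ref{lemma:smr_liveness_lemma}. Your observation that SMR-Monotonicity reduces to the $i<j$ case of SMR-Agreement is a slightly tidier packaging of the paper's explicit induction, but the underlying argument is identical.
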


\paragraph{SMR-Agreement}  
Suppose that two processors $p$ and $p'$ decide chains
based on their outputs $\langle i, d, u \rangle$ and $\langle j, d', u' \rangle$, respectively.
First, suppose that $i = j$.
By Turtle-Agreement, $d \preceq u'$, and we also know that $d' \preceq u'$.
Then, since $d$ and $d'$ are prefixes of the same chain $u'$,
it must be that either $d \preceq d'$ or $d' \preceq d$. Thus, the decided values agree.
Now suppose that $i < j$.
By Lemma~\ref{lemma:smr_safety_lemma}, we know that $d \preceq c$ for all
inputs $\langle j, c \rangle$ to tree turtle $j$. 
By Turtle-Unanimity, we have that $d \preceq d'$. 

\paragraph{SMR-Validity:}  
Suppose that some processor $p$
decides $d$. This means that $p$ produces an output 
$\langle i, d, u \rangle$ for some tree turtle $i$ and chain $u$. 
By Turtle-Validity, there must exist an input 
$\langle i, c \rangle$ to the same tree turtle such that $u \preceq c$. 
So, $c$ was proposed by some processors.
And since we must have $d \preceq u$, we know $d \preceq c$. 

\paragraph{SMR-Relay:}
If a correct processor decides $d$ as a result of its 
output of tree turtle $i$, then by Lemma~\ref{lemma:smr_safety_lemma} and
Turtle-Unanimity, any processor that completes tree turtle $i+1$
will decide $d$ or an extension of $d$,
and Lemma~\ref{lemma:smr_liveness_lemma}
gives us that all correct processors will
do exactly as such. 

\paragraph{SMR-Monotonicity:}
If a processor decides $d$ as a result of its output of tree turtle $i$,
then by Lemma~\ref{lemma:smr_safety_lemma} and Turtle-Unanimity,
any processor that completes tree turtle $i+1$ will 
decide $d$ or an extension of $d$.
So by induction, any later decision will be monotonically extending $d$.

\hfill\qedsymbol

The above protocol does
not guarantee SMR-Progress.
However, the fact that the correct processors 
eventually complete each turtle
(Lemma~\ref{lemma:smr_liveness_lemma}) can be used to make
an auxiliary argument for SMR-Progress for a specific protocol
(for instance, using a probabilistic termination argument).

\section{Tree Turtle Implementations} \label{sec:tt_impl}

Here we present two possible tree turtle implementations.

\subsection{One-Step Tree Turtle} \label{sec:onestep_tt}

Our first tree turtle protocol uses only a single round of communication
between processors, making it a \emph{one-step}
protocol~\cite{BGMR01}. 
It requires a quorum system satisfying $3$-intersection.

A processor $p$ executing the One-Step Tree Turtle protocol proceeds as follows:
\begin{enumerate}
\item[1a.] $p$ produces an input $\langle c \rangle$ and
broadcasts it to all processors, including itself;
\item[1b.] $p$ waits to receive inputs $\langle c_s \rangle$
from all processors $s$ in any quorum $Q_p$;
\item[1c.] $p$ produces $\langle d, u \rangle$, where:
\begin{itemize}
\item[i.] $d = \bigsqcap_{s \in Q_p} c_s$;
\item[ii.] Let $C_p = \big\{ \bigsqcap_{s \in Q_p \cap Q} c_s  ~|~ Q \in \mathcal{Q} \big\}$. Then $u = \max(C_p)$.
\end{itemize}
\end{enumerate}

That is, $d$ is simply the longest common prefix of
the received proposals.  To compute $u$, $p$ considers all quorums $Q$.
For each such quorum, $p$ determines the longest
common prefix on the proposals 
it received from the processors in $Q$.
We show below that these subchains agree with one
another. 

\begin{lemma} \label{lemma:onestep_totally_ordered}
All elements of $C_p$ agree. 
\end{lemma}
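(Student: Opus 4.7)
The plan is to show that any two elements of $C_p$ are prefixes of a single chain, from which agreement follows immediately since the prefixes of any chain are totally ordered by $\preceq$.

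First I would fix two arbitrary quorums $Q_1, Q_2 \in \mathcal{Q}$ and consider the corresponding elements $e_1 = \bigsqcap_{s \in Q_p \cap Q_1} c_s$ and $e_2 = \bigsqcap_{s \in Q_p \cap Q_2} c_s$ of $C_p$. The key step is to apply $3$-intersection to the three quorums $Q_p, Q_1, Q_2$, which guarantees a processor $s^* \in Q_p \cap Q_1 \cap Q_2$. In particular $s^*$ contributes its proposal $c_{s^*}$ to both meets, so $e_1 \preceq c_{s^*}$ and $e_2 \preceq c_{s^*}$.

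Next I would invoke the set-theoretic property of chains mentioned in the footnote after Figure~\ref{fig:trees}: for any chain $c$, the set $\{c' \mid c' \preceq c\}$ is totally ordered by $\preceq$. Applying this with $c = c_{s^*}$ yields either $e_1 \preceq e_2$ or $e_2 \preceq e_1$, i.e., $e_1 \simeq e_2$. Since $Q_1, Q_2$ were arbitrary, all elements of $C_p$ pairwise agree.

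There is essentially no difficult step here; the only thing to double-check is that the meets defining $e_1$ and $e_2$ are taken over nonempty sets, so that $e_1, e_2$ are actual chains rather than a vacuous top element. This is handled by the same application of $3$-intersection (or by its weaker consequence of $2$-intersection between $Q_p$ and any other quorum), which ensures $Q_p \cap Q \neq \emptyset$ for every $Q \in \mathcal{Q}$. The main conceptual observation is simply that $3$-intersection is exactly what is needed to relate two elements of $C_p$ through the shared ``anchor'' quorum $Q_p$.
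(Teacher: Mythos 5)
Your proof is correct and follows essentially the same argument as the paper: apply $3$-intersection to $Q_p$ and the two quorums defining the elements, obtain a common processor whose proposal both meets are prefixes of, and conclude agreement from the total ordering of prefixes of a single chain. The extra remark about non-emptiness of the meets is a reasonable (if minor) addition not spelled out in the paper.
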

\begin{proof}
All of the elements in $C_p$ are computed by taking the
intersection of $Q_p$ with another quorum.
Let $Q, Q'$ be any two quorums and $x, x'$ elements of $C_p$ 
where $x = \bigsqcap_{s \in Q_p \cap Q} c_s$,
and $x' = \bigsqcap_{s \in Q_p \cap Q'} c_s$.
Since $\mathcal{Q}$ satisfies $3$-intersection,
$Q_p \cap Q \cap Q'$ is non-empty.
Let $r$ be some processor in this intersection.
By the use of $\sqcap$, we have that $x \preceq c_r$ and $x' \preceq c_r$. 
This means that either $x \preceq x'$ or $x' \preceq x$. 
Thus, all elements of $C_p$ agree with each other. 
\end{proof}
Since $\preceq$ is transitive, $C_p$ has a maximum element
according to $\preceq$, and so the computation of $u$ is well-defined.
Further, all elements of $C_p$ are extensions of $d$:

\begin{lemma} \label{lemma:onestep_d_prefix_C}
For all $x \in C_p$, $d \preceq x$.  
\end{lemma}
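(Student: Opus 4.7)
The plan is to show this by an antitonicity argument: since $x$ is the meet of proposals from a subset of $Q_p$, $x$ must be at least as long as $d$, which is the meet over all of $Q_p$. Concretely, any $x \in C_p$ has the form $x = \bigsqcap_{s \in Q_p \cap Q} c_s$ for some quorum $Q \in \mathcal{Q}$, and the index set $Q_p \cap Q$ is contained in $Q_p$, so the common prefix of the smaller set of chains can only be longer.

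I would carry this out in the following steps. First, fix an arbitrary $x \in C_p$ and unpack the definition to obtain a quorum $Q$ with $x = \bigsqcap_{s \in Q_p \cap Q} c_s$. Second, note that $Q_p \cap Q$ is non-empty (by $3$-intersection applied to $Q_p$, $Q$, and any third quorum, or more simply by the fact that $3$-intersection implies $2$-intersection), so this meet is well defined. Third, observe that for every $s \in Q_p$ we have $d \preceq c_s$ from the definition of $d$ as $\bigsqcap_{s \in Q_p} c_s$; in particular this holds for every $s \in Q_p \cap Q$. Thus $d$ is a common prefix of the chains $\{c_s : s \in Q_p \cap Q\}$, and since $x$ is by definition the \emph{longest} such common prefix, we conclude $d \preceq x$.

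There is not much of a real obstacle here; the argument is just the standard antitonicity of meet with respect to its index set. The only subtle point worth being careful about is ensuring that the meet defining $x$ is over a non-empty set so that it is a well-defined chain (rather than, say, an undefined top element); this is what the quorum intersection property buys us. Everything else follows directly from the definition of $\sqcap$ as a longest common prefix.
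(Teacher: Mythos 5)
Your proof is correct and takes essentially the same route as the paper's: both rely on the antitonicity of the longest-common-prefix operation with respect to the index set, using $Q_p \cap Q \subseteq Q_p$. Your additional remark about non-emptiness of $Q_p \cap Q$ is a reasonable bit of extra care that the paper leaves implicit.
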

\begin{proof}
For any quorum $Q$, we can observe that $Q_p \cap Q \subseteq Q_p$.
The longest common prefix over a subset of inputs belonging to processors
$Q_p$ is at least as long as the longest common prefix over $Q_p$.
Thus, $d \preceq x$ for any $x \in C_p$. 
\end{proof}

We now show that the protocol implements a tree turtle.

\begin{theorem}
The One-Step Tree Turtle protocol satisfies the tree turtle specification (Definition~\ref{def:tt_spec}).
\end{theorem}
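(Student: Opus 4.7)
The plan is to verify each of the four clauses of Definition~\ref{def:tt_spec} in turn, leveraging Lemmas~\ref{lemma:onestep_totally_ordered} and~\ref{lemma:onestep_d_prefix_C}. Turtle-Termination should be essentially immediate: every correct processor broadcasts its input, and reliable delivery together with the fact that $Q^* \in \mathcal{Q}$ consists entirely of correct processors means each correct processor eventually has inputs from a complete quorum (namely $Q^*$, if nothing else arrives sooner). Lemma~\ref{lemma:onestep_totally_ordered} already guarantees that $u = \max(C_p)$ is well-defined. Turtle-Unanimity then follows directly from the definition of $\sqcap$ as the longest common prefix: if $w \preceq c_s$ for every $s \in Q_p$, then $w \preceq \bigsqcap_{s \in Q_p} c_s = d$.

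For Turtle-Validity I would observe that $u = \max(C_p)$ is itself some $\bigsqcap_{s \in Q_p \cap Q} c_s$ with $Q \in \mathcal{Q}$. Since $3$-intersection implies $2$-intersection (apply $3$-intersection to any two quorums together with a third arbitrary one drawn from $\mathcal{Q}$), the set $Q_p \cap Q$ is non-empty, and $u$ is a prefix of $c_s$ for any $s$ in the intersection; the input broadcast by $s$ serves as the required witness.

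The bulk of the work is in Turtle-Agreement. Let $p$ and $p'$ produce outputs $\langle d, u \rangle$ and $\langle d', u' \rangle$ using quorums $Q_p$ and $Q_{p'}$ respectively. Since each processor broadcasts only one input, $p$ and $p'$ record the same chain $c_s$ for every $s$ they both heard from, so the chain $x = \bigsqcap_{s \in Q_p \cap Q_{p'}} c_s$ is unambiguous. Taking $Q = Q_{p'}$ in the definition of $C_p$ shows $x \in C_p$, and because $Q_p \cap Q_{p'} \subseteq Q_{p'}$, the longest common prefix grows (or stays the same) when restricting to the subset, yielding $d' = \bigsqcap_{s \in Q_{p'}} c_s \preceq x \preceq u$. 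The symmetric argument, with the roles of $p$ and $p'$ swapped and $C_{p'}$ playing the role of $C_p$, gives $d \preceq u'$.

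The main subtlety will be the bookkeeping in Turtle-Agreement, particularly identifying, from $p$'s point of view only, an element of $C_p$ that already bounds $d'$ from above; the observation that selecting $Q = Q_{p'}$ in the definition of $C_p$ does exactly this is the crux, and it is what lets the argument avoid any further quorum intersection at this step. The nontrivial use of $3$-intersection is hidden in Lemma~\ref{lemma:onestep_totally_ordered}, which underwrites the well-definedness of $u$ that is relied on throughout.
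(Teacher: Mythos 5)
Your proposal is correct and follows essentially the same route as the paper's proof: Turtle-Agreement hinges on the chain $x = \bigsqcap_{s \in Q_p \cap Q_{p'}} c_s$ being an element of $C_p$ (and, symmetrically, of $C_{p'}$), bounded below by the $d$'s via the subset observation of Lemma~\ref{lemma:onestep_d_prefix_C} and above by the $u$'s via maximality, with the other three clauses handled exactly as the paper does. Your explicit note that $3$-intersection yields non-emptiness of $Q_p \cap Q$ in Turtle-Validity is a small point the paper leaves implicit, but it does not change the argument.
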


\paragraph{Turtle-Termination:} 
Suppose that all correct processors construct an input to the tree turtle.
The only point in the protocol where a given correct processor $p$ will wait
is to receive messages from a quorum of processors at step 1b. 
Since there is assumed to be a quorum $Q^*$ that consists entirely of correct processors, 
and the network reliably delivers messages between correct processors, 
$p$ will need to wait no longer than it takes for the messages from all
processors in $Q^*$ to be delivered.
Thus, $p$ will be able to complete the turtle at step 1c.

\paragraph{Turtle-Agreement:} Suppose that two processors $p$ and $p'$ 
produce $\langle d, u \rangle$ and $\langle d', u' \rangle$,
respectively. For all processors $r \in Q_p \cap Q_{p'}$, 
both $p$ and $p'$ received the proposal $c_r$ from $r$.
Letting $x = \bigsqcap_{r \in Q_p \cap Q_{p'}} c_r$,
we see that $x$ is present in both $C_p$ and $C_{p'}$.
By Lemma~\ref{lemma:onestep_d_prefix_C},
$d \preceq x$, and by the maximality of $u'$ over $C_{p'}$, $x \preceq u'$.
So, $d \preceq u'$ by transitivity.
The same argument can be used to show that $d' \preceq u$.

\paragraph{Turtle-Unanimity:} Suppose that there exists 
a common prefix $w \preceq c$ for 
all inputs $\langle c \rangle$ to the turtle. 
Because of step 1b in the protocol,
all of the $c_s$ values used to compute $\langle d, u \rangle$
came from inputs $\langle c_s \rangle$.
This means that $w$ is a prefix of each $c_s$, and so $w$ must be a
(not necessarily strict) prefix of the longest common prefix
$d = \bigsqcap_{s \in Q_p} c_s$ of the proposals.

\paragraph{Turtle-Validity:} Suppose that 
$p$ produces an output $\langle d, u \rangle$.
We know that $u = \bigsqcap_{s \in Q_p \cap Q} c_s$
is a prefix of all proposals made by processors 
in $Q_p \cap Q$ for some quorum $Q$.
So, taking any processor $r$ in the 
intersection $Q_p \cap Q$, 
we know that $u \preceq c_r$, where $\langle c_r \rangle$
was the input produced by $r$. \hfill\qedsymbol

\subsection{Lower-Bound Tree Turtle} \label{sec:lowerbound_tt}

Now we will present a second tree turtle protocol, the
Lower-Bound Tree Turtle. This protocol meets the lower bound
on the intersection properties of the quorum system needed
to solve SMR:
namely, 2-intersection in the crash failure case~\cite{BT83}.
With these weaker assumptions about the quorum system, we can design
a protocol that makes it safe for processors to output a chain based on 
messages from all processors in a quorum under the condition that the chains
in all such messages agree. Satisfying this condition requires an additional round of communication.

A processor $p$ executing the Lower-Bound Tree Turtle protocol proceeds as follows:
\begin{enumerate}
\item[1a.] $p$ produces an input $\langle c \rangle$ and
broadcasts it to all processors, including itself;
\item[1b.] $p$ waits to receive inputs $\langle c_s \rangle$
from all processors $s$ in any quorum $Q^1_p$;
\item[1c.] $p$ computes $x = \bigsqcap_{s \in Q^1_p} c_s$;
\item[2a.] $p$ broadcasts $\langle x \rangle$ to all processors, including itself;
\item[2b.] $p$ waits to receive messages $\langle x_s \rangle$
from all processors $s$ in a quorum $Q^2_p$; 
\item[2c.] $p$ produces $\langle d, u \rangle$, where
$d = \min\limits_{s \in Q^2_p} x_s$ and
$u = \max\limits_{s \in Q^2_p} x_s$.
\end{enumerate}

First we show that all chains computed in
step 1c agree:

\begin{lemma} \label{lemma:lowerbound_agreement}
If processors $p$ and $p'$ compute
$x$ and $x'$ in step 1c respectively,
then $x \simeq x'$.
\end{lemma}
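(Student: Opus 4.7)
The plan is to leverage the 2-intersection property of the quorum system, analogously to how Lemma~\ref{lemma:onestep_totally_ordered} used 3-intersection, but with one fewer quorum in play because here both $x$ and $x'$ are themselves longest common prefixes over full quorums rather than over pairwise intersections.

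First, I would invoke 2-intersection on the quorums $Q^1_p$ and $Q^1_{p'}$ used by $p$ and $p'$ at step 1b, obtaining some processor $r \in Q^1_p \cap Q^1_{p'}$. Since the failure model only allows crashes, $r$ executed step 1a faithfully and produced a single input $\langle c_r \rangle$ that it broadcast to everyone. By the non-forgery/non-garbling assumption on the network, both $p$ and $p'$ received exactly this same chain $c_r$ (if they received anything purporting to be from $r$ at all, which they did, since $r$ lies in each of their receiving quorums).

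Next, I would use the definition of $\sqcap$: because $r \in Q^1_p$ and $x = \bigsqcap_{s \in Q^1_p} c_s$, we have $x \preceq c_r$, and symmetrically $x' \preceq c_r$. Two chains that are both prefixes of a common chain must themselves be comparable under $\preceq$, hence $x \simeq x'$.

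There is no real obstacle here; the only thing to be careful about is citing the crash-only model to conclude that $r$ sent the \emph{same} $c_r$ to both $p$ and $p'$ (so that we are really taking $\sqcap$ over overlapping values), which is what justifies reducing this to a one-quorum-intersection argument rather than needing 3-intersection as in the One-Step protocol. The extra round 2 of the protocol is what eventually allows the weaker 2-intersection hypothesis to suffice for the final agreement properties, but its use does not appear in this particular lemma.
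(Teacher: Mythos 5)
Your proof is correct and follows essentially the same route as the paper's: both use 2-intersection to find a common processor $r \in Q^1_p \cap Q^1_{p'}$, observe that $p$ and $p'$ received the same chain $c_r$ from $r$, and conclude that $x$ and $x'$ are both prefixes of $c_r$ and hence agree. Your additional remarks about the crash-only model and the network's non-garbling assumption just make explicit what the paper leaves implicit.
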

\begin{proof}
Since there must be some processor $r$ in the 
intersection $Q^1_p \cap Q^1_{p'}$,
both $p$ and $p'$ received the same chain $c_r$ from $r$.
Then, since both $x$ and $x'$ are prefixes of $c_r$,
$x$ and $x'$ must agree.
\end{proof}

Note that the preceding lemma shows that the rules to compute $d$ and $u$
are well-defined since all of the chains $x_s$ for $s \in Q_p^2$ must agree. 
The next lemma is trivial and 
shows that a common prefix to all proposals 
is preserved in the chains computed at step 1c.

\begin{lemma} \label{lemma:lowerbound_unanimity}
If there exists a chain $w$ such that $w \preceq c$ for
all inputs $\langle c \rangle$, then for any
processor $p$ that computes $x$ in step 1c, $w \preceq x$.
\end{lemma}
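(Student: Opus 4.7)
The plan is to unfold the definition of $x$ from step 1c of the Lower-Bound Tree Turtle protocol and argue directly that $w$ is a prefix of it. Specifically, $x = \bigsqcap_{s \in Q^1_p} c_s$, where each $c_s$ is an input $\langle c_s \rangle$ received by $p$ from some processor $s$ in its first-round quorum $Q^1_p$.

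The first step is to observe that by the assumption of the lemma, $w \preceq c$ holds for every input $\langle c \rangle$, and in particular for each $c_s$ with $s \in Q^1_p$. Next, I would appeal to the fundamental property of the meet operator $\sqcap$ on chains: the longest common prefix of a collection of chains is itself the greatest (under $\preceq$) chain that is a prefix of every element in the collection. Since $w$ is a prefix of every $c_s$ in the set indexed by $Q^1_p$, it follows that $w \preceq \bigsqcap_{s \in Q^1_p} c_s = x$.

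The main (and only) subtle point is ensuring that every $c_s$ used by $p$ to form $x$ genuinely comes from an input $\langle c_s \rangle$ produced by some processor. This is guaranteed by the network non-forgery assumption from Section~\ref{sec:sys_model} together with step 1b, which only delivers values that were actually broadcast as inputs in step 1a. Beyond this observation, the proof is a one-line application of the universal property of $\sqcap$, which is why the statement is noted as trivial. No appeal to quorum intersection or to Lemma~\ref{lemma:lowerbound_agreement} is needed here, since the claim concerns only the single processor $p$ and its own first-round meet.
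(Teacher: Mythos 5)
Your proof is correct and is exactly the argument the paper has in mind---the paper omits the proof entirely, remarking only that the lemma is trivial, and your one-line application of the greatest-lower-bound property of $\sqcap$ (every common prefix of the $c_s$ is a prefix of their longest common prefix) is the intended justification. The extra remark about non-forgery is fine but not needed for the crash-failure setting of Section~\ref{sec:lowerbound_tt}.
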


We now show that the Lower-Bound protocol implements the specification of a tree turtle. 

\begin{theorem}
The Lower-Bound Tree Turtle protocol satisfies the tree turtle specification (Definition~\ref{def:tt_spec}).
\end{theorem}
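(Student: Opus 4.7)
The plan is to verify each of the four tree turtle properties in turn, leveraging Lemmas \ref{lemma:lowerbound_agreement} and \ref{lemma:lowerbound_unanimity} throughout so that $\min$ and $\max$ over received chains at step 2c are well-defined (all such chains agree pairwise, hence lie on a common path in the tree and are totally ordered by $\preceq$).

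For Turtle-Termination I would mimic the argument used for the One-Step protocol: the only blocking points are the quorum waits at steps 1b and 2b, and since $\mathcal{Q}$ contains the all-correct quorum $Q^*$ and the network reliably delivers messages between correct processors, a correct $p$ can always choose $Q^1_p = Q^2_p = Q^*$, so both waits eventually unblock.

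For Turtle-Agreement, let $p$ and $p'$ produce outputs $\langle d, u\rangle$ and $\langle d', u'\rangle$. Using only $2$-intersection, pick some $r \in Q^2_p \cap Q^2_{p'}$; both $p$ and $p'$ received the same chain $x_r$ from $r$ at step 2b. By Lemma \ref{lemma:lowerbound_agreement} the sets $\{x_s : s \in Q^2_p\}$ and $\{x_s : s \in Q^2_{p'}\}$ are each totally ordered by $\preceq$, so we can write $d \preceq x_r$ and $x_r \preceq u'$, and conclude $d \preceq u'$ by transitivity. Symmetrically $d' \preceq u$. Turtle-Unanimity is then immediate from Lemma \ref{lemma:lowerbound_unanimity}: if $w \preceq c$ for every input, then $w \preceq x_s$ for every chain $x_s$ computed at step 1c and relayed at step 2a, so $w$ is a prefix of $d = \min_{s \in Q^2_p} x_s$.

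For Turtle-Validity, suppose $p$ outputs $\langle d, u\rangle$. Since $u = \max_{s \in Q^2_p} x_s$, there is some $r \in Q^2_p$ with $u = x_r$, and $r$ obtained $x_r = \bigsqcap_{s \in Q^1_r} c_s$ at its own step 1c. Picking any $s \in Q^1_r$ gives $u = x_r \preceq c_s$, where $\langle c_s \rangle$ was the input produced by $s$, as required. The main subtlety across all four arguments is ensuring that the $\min$/$\max$ expressions used to define $d$ and $u$ are sensible; the rest follows by routine application of $2$-intersection and transitivity of $\preceq$, so I expect Turtle-Agreement to be the step where one must be most careful to invoke Lemma \ref{lemma:lowerbound_agreement} before chaining the prefix relations.
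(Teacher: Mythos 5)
Your proposal is correct and follows essentially the same route as the paper: Turtle-Agreement via a processor $r\in Q^2_p\cap Q^2_{p'}$ together with Lemma~\ref{lemma:lowerbound_agreement} and the minimality/maximality of $d$ and $u'$, Turtle-Unanimity directly from Lemma~\ref{lemma:lowerbound_unanimity}, and Termination/Validity by the same adaptation of the One-Step arguments that the paper invokes. Your explicit unwinding of Turtle-Validity (that $u=x_r=\bigsqcap_{s\in Q^1_r}c_s$ is a prefix of some input $c_s$) is exactly the intended instantiation of the paper's ``similar to One-Step'' remark.
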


\paragraph{Turtle-Agreement:} Suppose that two processors $p$ and $p'$ 
produce $\langle d, u \rangle$ and $\langle d', u' \rangle$, respectively.
There must be some processor $r$ in the intersection $Q_p^2 \cap Q_{p'}^2$,
and both $p$ and $p'$ received the same chain $x_r$ from $r$.
By Lemma~\ref{lemma:lowerbound_agreement} 
and the minimality of $d$, it must be that $d \preceq x_r$.
And by the maximality of $u'$, we have that $x_r \preceq u'$.
Therefore, $d \preceq u'$. The same argument can be used to show that $d' \preceq u$.

\paragraph{Turtle-Unanimity:} Suppose that there exists a chain $w$ 
such that $w \preceq c$ for all inputs $\langle c \rangle$,
and processor $p$ produces output $\langle d, u \rangle$.
Lemma~\ref{lemma:lowerbound_unanimity} shows that $w$ will be 
a prefix of the minimum chain $d$ received by $p$ in step 2b. \hfill\qedsymbol

\vspace{1em}
\textbf{\emph{Turtle-Termination}} and \textbf{\emph{Turtle-Validity}} are
similar to the proof for One-Step Tree Turtle (Section~\ref{sec:onestep_tt}).

\subsection{Message Size}

The protocol we discussed uses messages 
containing chains that represent the entire
state machine history. As the size of this history grows, 
this quickly becomes impractical.
However, it is not necessary for a processor to broadcast
a chain in subsequent turtles 
once it has decided that chain 
following the construction in Section~\ref{sec:tt_to_smr}. 

If processor $p$ decides chain $d$ as a result of its
output of a tree turtle, then any other processor $p'$ 
that outputs $\langle d', u' \rangle$ from the same tree turtle
will have $d \preceq u'$ by Turtle-Agreement. Thus,
$p'$ already knows the contents of the chain $d$, 
and $p$ may omit that chain prefix in its proposals to subsequent turtles.

\subsection{Heterogeneous Protocols}

The simplest way to construct an SMR protocol using the
proposed abstractions is to use a single tree turtle
protocol, instantiated an unbounded number of times.
There are other options, however.

Different tree turtle protocols may have different
normal case or worst case performance properties.
The Lower-Bound Tree Turtle,
when combined with a leader as discussed in
Section~\ref{sec:prac}, has good normal case
performance properties, but it relies on synchrony
assumptions for liveness.  A similar protocol, borrowing
ideas from the Ben-Or protocol using randomness~\cite{Ben83},
can provide termination almost surely but has bad normal
case performance.  By alternating between the two
protocols, we can achieve the best of both worlds.

\section{Leaders as an Abstraction} \label{sec:prac}

Processors may make different proposals to a tree turtle,
preventing them from being able to decide new chains.
This issue of contention has been addressed previously in
SMR protocols
by using a \emph{leader} which drives all processors 
to use the same proposals.
In existing leader-based consensus protocols,
the leader lies in the critical path of the protocol:
without a functioning leader, no decisions may be made.
We show that using tree turtles, 
leaders can be easily \emph{factored out} so
that their only role is to help the protocol towards
making decisions.

Leaders can be introduced to
an existing tree turtle protocol as follows:
\begin{itemize}
\item The leader $\ell$ for tree turtle $i$
is processor with identifier $i \pmod{|\mathcal{P}|}$,
meaning the role of leader rotates through all of the processors. 
\item The leader $\ell$ for tree turtle $i$
broadcasts its input $c_{\ell}$ to tree turtle $i$.
\item All processors set a timer and wait for the leader's message. 
If a processor $p$ receives $c_{\ell}$ before the timer expires,
then it uses $c_{\ell}$ as its input to tree turtle $i$. 
Otherwise, it proceeds normally.
\item The processors double the length of the timer in each tree turtle.
\end{itemize}

Under synchronous conditions, the leader is able to eliminate contention.
The proof requires reasoning about quorums and 
messages, but these concepts generalize beyond
the protocols presented in Section~\ref{sec:tt_impl}.

\begin{lemma}
Using the above construction and the protocols in Section~\ref{sec:tt_impl},
if there exists an upper bound $\Delta$
such that a message sent between two non-crashed processors 
is delivered and processed within $\Delta$, then
there will be an unbounded number of tree turtles
where the leader's message is received by all non-crashed processors
before their timers expire.
\label{lem:t-bound}
\end{lemma}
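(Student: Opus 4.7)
The plan is to show that the timers eventually grow large enough that, whenever the leader is non-crashed, its broadcast reaches all non-crashed processors before their timers expire. Combined with the rotation of leaders, this yields the desired unbounded set of ``good'' tree turtles.

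Let $t_p^i$ denote the time at which a non-crashed processor $p$ begins executing tree turtle $i$, and let $D_i = \max_{p,q \text{ non-crashed}} |t_p^i - t_q^i|$ be the drift among non-crashed processors at turtle $i$. The first observation is that if $D_i \le T_i - \Delta$, where $T_i$ is the timer length at turtle $i$, and the leader $\ell$ for turtle $i$ is non-crashed, then its broadcast sent at $t_\ell^i$ arrives at every non-crashed $p$ by $t_\ell^i + \Delta \le t_p^i + D_i + \Delta \le t_p^i + T_i$, i.e., before $p$'s timer expires. So it suffices to show that $D_i \le T_i - \Delta$ holds at infinitely many turtles whose leader is non-crashed.

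To bound the drift, I would establish a recurrence of the form
\[
D_{i+1} \le \max(D_i, T_i) + c\Delta,
\]
where $c$ is a small constant absorbing the bounded number of message exchanges internal to either turtle protocol in Section~\ref{sec:tt_impl}. The proof is a case analysis on whether each of a pair $p, q$ of non-crashed processors uses the leader's message or waits out its full timer; in the ``mixed'' case where $p$ receives the message but $q$'s timer expires first, the condition $t_\ell^i + \Delta > t_q^i + T_i$ together with $|t_\ell^i - t_q^i| \le D_i$ (from $\ell$ being non-crashed) forces $T_i - \Delta < D_i$ and pins down the leader's start time, which in turn bounds the resulting gap between $t_p^{i+1}$ and $t_q^{i+1}$. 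Because timers double ($T_{i+1} = 2T_i$), the recurrence splits into two regimes: if $D_i \le T_i$, then $D_{i+1} \le T_i + c\Delta < T_{i+1}$ for all sufficiently large $T_i$, so the invariant $D_i \le T_i$ is preserved thereafter; if $D_i > T_i$, then $D_{i+1}/T_{i+1} \le D_i/(2T_i) + c\Delta/(2T_i)$, so the drift-to-timer ratio essentially halves each turtle. After finitely many turtles $D_i \le T_i - \Delta$ is established and is thereafter maintained. Combined with the round-robin rotation---which ensures that a correct processor is the leader for infinitely many turtles---this yields infinitely many turtles satisfying the lemma's condition.

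The hardest part will be the drift recurrence, specifically the mixed case where one processor uses the leader's input while another falls back to its own. This step requires leveraging the leader's non-crashed status to constrain its timing relative to every other non-crashed processor via $D_i$, and separately handling whether the leader's message arrives at a given processor before or after that processor has started the turtle. The remaining routine work is checking that the small per-turtle overheads of the underlying turtle protocol itself really can be absorbed into the additive $c\Delta$ slack, and that the argument is uniform across both the One-Step and Lower-Bound implementations.
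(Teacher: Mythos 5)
Your overall strategy---bound the start-time drift $D_i$ among non-crashed processors, show that the doubling timers eventually dominate it, and then invoke the round-robin rotation---is a genuinely different decomposition from the paper's (which never controls global drift; it instead locates a correct leader inside a quorum that is at most $|\mathcal{P}|$ turtles behind and bounds that one processor's catch-up time by the geometric sum $t/2 + t/4 + \cdots + t/2^{|\mathcal{P}|} < t$). Your route could in principle work, but the recurrence you build it on is false as stated, and the failure is not in the ``mixed case'' you flag as the hard part.

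The counterexample is a \emph{crashing} leader, not a slow one. Suppose at turtle $i$ the leader starts at time $m_i = \min_p t_p^i$, its broadcast reaches a full quorum of processors that also start at $m_i$ with negligible delay (there is no lower bound on message latency), those processors complete the turtle essentially instantly, and the leader then crashes before its message is sent to a lagging non-crashed processor $p$ with $t_p^i = M_i = m_i + D_i$. Then $m_{i+1} = m_i$, while $p$ never receives the leader's input, waits out its full timer, and completes at $M_i + T_i$ (it already holds the fast quorum's broadcasts). Hence $D_{i+1} = D_i + T_i$, which violates $D_{i+1} \le \max(D_i, T_i) + c\Delta$ whenever $T_i \gg c\Delta$, and in particular destroys your claim that the invariant $D_i \le T_i$ is preserved once established ($D_i = T_i$ yields $D_{i+1} = 2T_i = T_{i+1}$, and the slack $-\Delta$ you need is gone). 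By contrast, when the leader is non-crashed its broadcast reaches \emph{every} non-crashed processor within $\Delta$ of its start, and a short case analysis shows $D_{i+1} \le D_i + c\Delta$ regardless of whether timers expire; so the additive-$T_i$ blow-up is caused precisely by partial delivery of the leader's broadcast, which in the crash model can occur at most $f$ times over the whole execution. The repair is therefore to replace your single recurrence by two regimes: $D_{i+1} \le D_i + c\Delta$ except in at most $f$ exceptional turtles where $D_{i+1} \le D_i + T_i + c\Delta$, and then argue that after the last crash the drift grows only linearly in $i$ while $T_i$ doubles, so $D_i \le T_i - \Delta$ eventually holds permanently. You would also need to justify that $D_i$ is well defined and finite at all, which requires Lemma~\ref{lemma:smr_liveness_lemma}. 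The paper's argument avoids all of this by never comparing the slowest processor to the fastest: it only needs one correct, quorum-participating leader to be at most $|\mathcal{P}|$ turtles behind, which is cheaper to establish and insensitive to the crashing-leader scenario above.
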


\begin{proof}
Consider any tree turtle $i$ after the point in the execution
where the timers have become larger than $\Delta \cdot 2^{|\mathcal{P}|}$.
In general, the leader for a tree turtle may not have started executing that
tree turtle at the point when other processors begin waiting for its message. 
Applying Lemma~\ref{lemma:smr_liveness_lemma},
consider a processor $p$ that has begun tree turtle $i+1$. 
In all of our protocols, $p$ must wait to receive messages
from a quorum before completing the protocol.
Thus, must be a quorum $Q$ of processors who have begun tree turtle $i$.
Since $Q \cap Q^*$ is non-empty, there must be a tree turtle in 
the next $|\mathcal{P}|$ instances whose leader is a correct processor in $Q$.
Let $j$ be the first such instance and let $\ell$ be the leader for tree turtle $j$.

Since $\ell$ has already begun tree turtle $i$, it must catch up at most 
$|\mathcal{P}|$ tree turtles to reach $j$.
Let $t$ be the timer length for tree turtle $j$.
Since all timer lengths are greater than $2\Delta$,
$\ell$ will have received messages from a quorum for every tree turtle up 
to $j$. 
A period of $\Delta$ is sufficient for other processors to receive $\ell$'s
input to tree turtle $j$. 
However, this does not include the timers that $\ell$ must set
for tree turtles $i, \dots, j - 1$.
Since the timer lengths are doubled after each tree turtle, 
the total time required is
$\Delta + (t/2^{|\mathcal{P}|} + t/2^{|\mathcal{P}|-1} \cdots + t/2) = \Delta + (1 - 2^{-|\mathcal{P}|})t$. 
Since $t > \Delta \cdot 2^{|\mathcal{P}|}$, $\ell$'s message will
be received before the timers expire.
Since there are a constant number of turtles in which failures happen, 
there will be an unbounded number of such turtles. 
\end{proof}

\begin{theorem}
Using the above construction and the protocols in Section~\ref{sec:tt_impl},
if there exists an upper bound $\Delta$
such that a message sent between two non-crashed processors 
is delivered and processed within $\Delta$, then SMR-Progress is satisfied
for the composition of tree turtles (Section~\ref{sec:tt_to_smr}).
\end{theorem}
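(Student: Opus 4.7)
The plan is to combine Lemma~\ref{lem:t-bound} with the rotation of the leader role and conclude via Turtle-Unanimity. Suppose, for contradiction, that a correct processor $p$ proposes a chain $c$ not in its decision history but never afterwards decides a new chain. By SMR-Monotonicity, $p$'s decisions are then all equal to some chain $e$, and by the construction in Section~\ref{sec:tt_to_smr}, $p$'s input to each subsequent turtle extends $e$; since $c$ has never been decided by $p$ but extends $p$'s latest $u$, we must in fact have $e \prec c$. I would formalize the natural assumption that a correct processor carries its desired new content forward: at every turtle $j$ after $p$ first proposes $c$ and until $p$ has decided a strict extension of $e$, the input $p$ constructs still satisfies $c \preceq c_p^j$.

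Next, I would adapt the proof of Lemma~\ref{lem:t-bound} to show that $p$ itself is the leader of some good turtle after it first proposes $c$. The timer accounting in that proof, which bounds the catch-up plus broadcast cost by $\Delta + (1 - 2^{-|\mathcal{P}|})t < t$ when $t > \Delta \cdot 2^{|\mathcal{P}|}$, goes through for any correct processor that has begun the reference turtle, not just the specific one pulled from $Q \cap Q^*$. By Lemma~\ref{lemma:smr_liveness_lemma}, $p$ eventually begins every turtle, so the argument applies with $p$ in place of $\ell$, yielding a turtle $j$ at which $p$ is the leader and $p$'s message reaches every non-crashed processor before their timers expire.

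At such a turtle $j$, every non-crashed processor adopts $p$'s input $c_p^j$ instead of its own, so $c_p^j$ is a common prefix of all inputs that get constructed. Turtle-Unanimity then forces every output $\langle d, u \rangle$ of turtle $j$ to satisfy $c_p^j \preceq d$, and since $c \preceq c_p^j$, in particular $p$'s own decided chain $d$ strictly extends $e$. This contradicts the assumption that $p$ never decides a chain beyond $e$, establishing SMR-Progress.

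The main obstacle is the strengthening of Lemma~\ref{lem:t-bound} in the second step: the lemma as stated only produces some good turtle, while the argument needs one specifically led by $p$. The fix is to observe that the existing timer budget is generous enough for any correct processor, $p$ included, to catch up within $|\mathcal{P}|$ turtles, so the same accounting applies uniformly; the only real content here is verifying that $p$ plays the role previously played by the anonymous $\ell \in Q \cap Q^*$. A secondary, smaller subtlety is the persistence of $p$'s proposal, since without it a processor could vacuously satisfy the hypothesis of SMR-Progress by proposing $c$ once and then reverting to its previous $u$ in every later turtle.
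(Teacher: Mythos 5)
Your overall strategy is the paper's: invoke Lemma~\ref{lem:t-bound} to obtain good turtles in which the leader's input is adopted by everyone, then apply Turtle-Unanimity to force a decision. The paper's own proof is only three sentences and silently skips the two points you raise --- that the leader's chain in a good turtle need not be \emph{new} for the particular processor $p$ in the hypothesis of SMR-Progress, and that $p$'s proposed content must survive until it can influence a decision. Identifying these is genuinely valuable; the paper simply asserts ``correct processors will always eventually be able to decide a new chain,'' which does not follow from the lemma as stated, since the unboundedly many good turtles it produces could all be led by processors proposing nothing beyond what $p$ has already decided.

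However, your fix has a gap of its own. The claim that ``the existing timer budget is generous enough for any correct processor, $p$ included, to catch up within $|\mathcal{P}|$ turtles'' does not transfer for free. In the proof of Lemma~\ref{lem:t-bound}, the anonymous leader $\ell$ is drawn from $Q \cap Q^*$ precisely so that $\ell$ is known to have \emph{already begun} tree turtle $i$ when the reference processor begins turtle $i+1$; this is what bounds $\ell$'s lag to at most $|\mathcal{P}|$ turtles and makes the geometric sum $\Delta + (1-2^{-|\mathcal{P}|})t$ fit inside $t$. An arbitrary correct $p$ carries no such a priori bound on its lag --- Lemma~\ref{lemma:smr_liveness_lemma} says $p$ eventually begins every turtle but gives no bound on how far behind the fastest quorum it runs --- so you would need an additional argument (e.g., that once timers exceed $2\Delta$ the lag of any correct processor stabilizes to a bounded number of turtles) before the same accounting applies with $p$ in the role of $\ell$. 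Separately, your persistence assumption as literally stated ($c \preceq c_p^j$ at \emph{every} subsequent turtle) is inconsistent with the leader construction itself, since $p$ adopts $c_\ell$ as its input whenever a foreign leader's message arrives in time, and $c_\ell$ need not extend $c$; moreover $p$'s evolving $u$ may diverge from $c$ below their common prefix $e$, making a literal extension of $c$ impossible to propose. What you actually need is the weaker statement that at the one turtle where $p$ leads and is not overridden, $p$'s input is a strict extension of $e$ --- that is, $p$ re-appends its pending new values to its current $u$. With those two repairs the argument closes; as written, both steps are asserted rather than proved.
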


\begin{proof}

According to Lemma \ref{lem:t-bound}, there will be an unbounded number of tree turtles
where the leader's message is received before all timers expire.
Turtle-Unanimity provides that the processors who complete the turtle
will decide the leader's chain.
It follows that correct processors will always eventually be able to decide a new chain.

\end{proof}

\section{Byzantine Fault Tolerance} \label{sec:bft}

In this section, we extend  
tree turtles to make them Byzantine fault-tolerant (BFT).
We also modify the two protocols in
Section~\ref{sec:tt_impl} to turn them into BFT-tree turtles.

\subsection{Modified System Model}

\subsubsection*{Failures.}

The failure model will now provide that 
a faulty processor may not only crash but also behave arbitrarily:
it can deviate from the protocol, send arbitrary or
contradictory messages, and fail to send messages to some or all processors.
Such a processor is referred to as \emph{Byzantine}. 

\subsubsection*{Digital signatures.}
Because Byzantine processors can make up arbitrary data, our
protocols will use cryptographic digital signatures.
We use $\dsig{m}{p}$ to denote a digital signature of data object $m$
created by processor $p$. 
We assume that no processors can forge the signatures 
of correct processors, so that if a processor $q$ witnesses both $m$ and $\dsig{m}{p}$,
then $q$ knows that if $p$ is correct, then $p$ produced $m$.

\subsection{BFT-Tree Turtles \& BFT-SMR} 

Designing BFT-tree turtles necessitates re-defining both SMR and tree turtles for the BFT setting.
As before, processors executing an SMR protocol propose and
decide chains. The BFT-SMR properties are only slight variations 
of their crash-tolerant counterparts. Now, we only concern 
ourselves with the decisions made by correct processors.

\begin{definition}[Byzantine Fault-Tolerant State Machine Replication Specification] \label{def:bft_smr_spec}
\par\noindent $ $
\begin{itemize}[leftmargin=1.25em]
\item \emph{BFT-SMR-Agreement}:
if a correct processor $p$ decides chain $d$ and a correct processor $p'$
decides chain $d'$, then $d \simeq d'$ (even if $p = p'$).
\item \emph{BFT-SMR-Validity}:
if a correct processor decides chain $d$, 
then some processor proposed a chain $c$ such that $d \preceq c$.
\item \emph{BFT-SMR-Relay}:
if a correct processor decides chain $d$, then eventually all correct processors
decide $d$ or an extension of $d$.
\item \emph{BFT-SMR-Monotonicity}:
if a correct processor decides chain $d$ and later another chain $d'$, then $d \preceq d'$.
\item \emph{BFT-SMR-Progress}: 
if a correct processor proposes a chain that it has not decided before, then
that processor eventually decides a chain it has not decided before.
\end{itemize}
\end{definition}

To define BFT-tree turtles, however, we will take a different approach.
Rather than only specifying the behavior of correct processors, 
we will instead strengthen tree turtle inputs and outputs 
by exclude some arbitrary behavior. We will call use the terms
\emph{BFT-inputs} and \emph{BFT-outputs}, respectively, 
to distinguish them from the earlier definitions.

To allow protocols to strengthen BFT-inputs and BFT-outputs,
they will include an additional, abstract piece of data called \emph{evidence}.
The implementation of evidence is determined a specific protocol
as needed to satisfy the specification for a BFT-tree turtle.
While both of our protocols will use digital signatures in evidence,
other common techniques for turning crash-tolerant SMR protocols into BFT protocols, 
such as requiring more quorum intersection, private network links, or using 
additional rounds of communication, may be used.
In fact, a BFT-tree turtle protocol will only define the evidence 
needed for a BFT-output of the protocol 
since BFT-inputs are constructed externally from the protocol---it 
will turn out from our composition of BFT-tree turtles into BFT-SMR
that the evidence in a BFT-input will be defined by
the previous tree turtle in the execution.

In a BFT-tree turtle, a processor constructs a BFT-input $\langle c, e_c \rangle$
consisting of a chain $c$ and evidence $e_c$, and produces a BFT-output
$\langle d, u, e_{d,u} \rangle$ with a pair of chains $d,u$ and evidence $e_{d,u}$.

\begin{definition}[Byzantine Fault-Tolerant Tree Turtle Specification] 
    \label{def:bft_tt_spec}
\par\noindent $ $
\begin{itemize}[leftmargin=1.25em]
\item \emph{BFT-Turtle-Termination}: 
if each correct processor constructs a BFT-input, 
then eventually each correct processor produces a BFT-output.
\item \emph{BFT-Turtle-Agreement}: 
for any two BFT-outputs $\langle d, u, e_{d,u} \rangle$ and 
$\langle d', u', e_{d',u'} \rangle$,
$d \preceq u'$ and $d' \preceq u$.
\item \emph{BFT-Turtle-Unanimity}:
for any chain $w$, 
if $w \preceq c$ for all BFT-inputs $\langle c, e_{c} \rangle$,
then $w \preceq u$ for all BFT-outputs $\langle d, u, e_{d,u} \rangle$.  
\item \emph{BFT-Turtle-Validity}:
if some processor produces a BFT-output $\langle d, u, e_{d,u} \rangle$, 
then some processor must
have produced a BFT-input $\langle c, e_{c} \rangle$ such that $u \preceq c$.
\end{itemize}
\end{definition}

Note that BFT-Turtle-Unanimity is weaker than its crash-tolerant counterpart (Definition~\ref{def:tt_spec}). 
This weaker property is easier to satisfy
in the BFT case and is sufficient for implementing 
the safety properties of BFT-SMR, although it means that BFT-SMR-Monotonicity
is no longer guaranteed.

Evidence is not intended to prevent all Byzantine behavior.
Depending on the BFT-tree turtle protocol, faulty processors may be able to produce 
multiple BFT-inputs or BFT-outputs or 
use arbitrary chains to produce a BFT-output the protocol. 
However, evidence allowes us to reason about 
BFT-inputs and BFT-outputs made by correct and Byzantine processors alike
to the extent allowed by the constraints defined by the protocol.

\subsection{BFT-Tree Turtles Implement BFT-SMR}

The composition of BFT-tree turtles is very similar to that of crash tolerant
tree turtles (Section~\ref{sec:tt_to_smr}). 
As before, all of the processors in 
$\mathcal{P}$ will execute an unbounded sequence of tree turtles in succession.
If a processor completes BFT-tree turtle $i$ and
produces a BFT-output $\langle i, d, u, e_{d,u} \rangle$, 
then that processor decides the chain $d$ \emph{only if} $d$ is longer
than its longest decided chain.
This is to ensure BFT-SMR-Monotonicity is satisfied. 
The processor then selects a new chain $c$ to propose satisfying $u \preceq c$.
It then constructs a BFT-input to the next BFT-tree turtle $i+1$
as $\langle i+1, c, \langle i, d, u, e_{d,u} \rangle \rangle$,
using its BFT-output from the previous tree turtle as the evidence in its BFT-input.

\begin{lemma} \label{lemma:bft_smr_safety_lemma}
If a processor produces BFT-output $\langle i, d, u, e_{d,u} \rangle$ 
for BFT-tree turtle $i$, then for all BFT-inputs $\langle j, c, e_{c} \rangle$ to subsequent BFT-tree turtles 
$j > i$, it must be that $d \preceq c$.
\end{lemma}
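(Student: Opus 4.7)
The plan is to mirror the proof of Lemma~\ref{lemma:smr_safety_lemma}, inducting on $j - i$ and exploiting the composition rule from Section~\ref{sec:tt_to_smr} (as adapted for the BFT case) that forces every BFT-input $\langle k, c, e_c \rangle$ to carry, as its evidence $e_c$, a BFT-output $\langle k-1, d', u', e_{d',u'} \rangle$ of the preceding tree turtle satisfying $u' \preceq c$. Unforgeability of digital signatures ensures this property applies even to inputs supplied by Byzantine processors, because the evidence is precisely the constraint distinguishing a well-formed BFT-input from arbitrary data.

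For the base case $j = i + 1$, I would take an arbitrary BFT-input $\langle i+1, c, e_c \rangle$ and unpack $e_c$ as a BFT-output $\langle i, d', u', e_{d',u'} \rangle$ with $u' \preceq c$. Applying BFT-Turtle-Agreement to the two BFT-outputs $\langle d, u, e_{d,u} \rangle$ and $\langle d', u', e_{d',u'} \rangle$ of tree turtle $i$ gives $d \preceq u'$, and transitivity with $u' \preceq c$ yields $d \preceq c$.

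For the inductive step $j > i + 1$, I would again unpack the evidence of the given BFT-input as a BFT-output $\langle j-1, d', u', e_{d',u'} \rangle$ with $u' \preceq c$. The inductive hypothesis, applied at stage $j-1$, gives $d \preceq c''$ for every BFT-input $\langle j-1, c'', e_{c''} \rangle$. Invoking BFT-Turtle-Unanimity on tree turtle $j-1$ with $w = d$ then yields $d \preceq u'$, and combining with $u' \preceq c$ closes the step.

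The main obstacle is the weakening from Turtle-Unanimity to BFT-Turtle-Unanimity: the latter only propagates prefixes through the $u$-component of outputs rather than through the $d$-component. Unlike the crash-tolerant proof---which concludes $d \preceq d' \preceq c$ directly---I must route through $u'$ and lean on the construction clause $u' \preceq c$ to bridge back to the next input. A secondary subtlety is justifying that the evidence in every BFT-input really is a genuine BFT-output of the preceding turtle; this relies on the unforgeability of the concrete evidence scheme (digital signatures in the protocols of Section~\ref{sec:tt_impl}) and is best treated as an invariant of the BFT composition rather than reproved inline.
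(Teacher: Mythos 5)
Your proof is correct and follows essentially the same route as the paper's: induction on $j-i$, with BFT-Turtle-Agreement plus the composition constraint $u' \preceq c$ for the base case, and the inductive hypothesis feeding BFT-Turtle-Unanimity (routed through $u'$ rather than $d'$, exactly as the paper does to accommodate the weaker unanimity guarantee) for the step. Your remark that the well-formedness of Byzantine-supplied evidence is an invariant of the composition rather than something to reprove inline matches the paper's implicit treatment.
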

\begin{proof}
The base case is analogous to that given for Lemma~\ref{lemma:smr_safety_lemma}.
In the inductive case, if a processor $p$ constructs a
BFT-input $\langle j, c, e_{c} \rangle$ to BFT-tree turtle $j$,
then $p$ must have produced a BFT-output $\langle j-1, d', u', e_{d', u'} \rangle$ 
in the previous tree turtle which satisfies $u' \preceq c$.
The inductive hypothesis gives us that $d \preceq c'$ for all
BFT-inputs $\langle j-1, c', e_{c'} \rangle$ to BFT-tree turtle $j-1$. 
So the condition for BFT-Turtle-Unanimity is satisfied
and this gives us that $d \preceq u' \preceq c$.
\end{proof}

\begin{lemma}\label{lemma:bft_smr_liveness_lemma}
Every correct processor eventually produces a BFT-output for each BFT-tree turtle.
\end{lemma}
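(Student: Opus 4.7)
The plan is to mirror the proof of Lemma~\ref{lemma:smr_liveness_lemma} almost verbatim, performing induction on the BFT-tree turtle index $i$ and invoking BFT-Turtle-Termination at each stage.

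For the base case, every correct processor trivially constructs a BFT-input to BFT-tree turtle $1$: it picks any chain $c$ extending $\bot$ and pairs it with a suitable initial evidence (the BFT analogue of the $\langle 0, \bot, \bot\rangle$ initialization used in Section~\ref{sec:tt_to_smr}). Since every correct processor supplies a BFT-input, BFT-Turtle-Termination yields a BFT-output for each.

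For the inductive step, suppose every correct processor eventually produces a BFT-output $\langle i, d, u, e_{d,u}\rangle$ at stage $i$. The composition rule described immediately before the lemma statement then prescribes that each such processor constructs the BFT-input $\langle i+1, c, \langle i, d, u, e_{d,u}\rangle\rangle$ for stage $i+1$ by choosing any $c$ with $u \preceq c$ (for instance, $c = u$). Hence every correct processor eventually constructs a BFT-input at stage $i+1$, and BFT-Turtle-Termination again delivers a BFT-output to each.

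The only subtlety, and the main thing I would be careful to make explicit in the write-up, is the precondition of BFT-Turtle-Termination: it fires only once \emph{every} correct processor has supplied a BFT-input. In the Byzantine setting it is tempting to worry that a specific BFT-tree turtle implementation might impose well-formedness checks on the evidence carried by a BFT-input. However, because that evidence is, by construction, a BFT-output previously produced by the same correct processor, any such implementation-specific checks are automatically satisfied by virtue of the inductive hypothesis. Thus no new Byzantine-specific argument is needed, which is precisely why the proof reduces to the crash-tolerant induction.
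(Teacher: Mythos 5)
Your proof is correct and follows essentially the same route as the paper, which simply states that the argument of Lemma~\ref{lemma:smr_liveness_lemma} carries over with ``non-crashed'' replaced by ``correct'': induction on the turtle index with BFT-Turtle-Termination discharging each stage. Your extra remark about evidence well-formedness being satisfied by construction is a reasonable clarification the paper leaves implicit, but it does not change the argument.
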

\begin{proof}
The proof is the same as that given for Lemma~\ref{lemma:smr_liveness_lemma},
except that the reasoning applies to all \emph{correct} processors rather than
all non-crashed processors.
\end{proof}

\begin{theorem}
The composition of BFT-tree turtles implements a protocol satisfying 
BFT-SMR-Agreement, BFT-SMR-Validity, and BFT-SMR-Monotonicity.
\end{theorem}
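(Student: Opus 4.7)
The plan is to adapt, almost line for line, the proof of the analogous crash-tolerant theorem in Section~\ref{sec:tt_to_smr}, using Lemmas~\ref{lemma:bft_smr_safety_lemma} and~\ref{lemma:bft_smr_liveness_lemma} in place of their crash-tolerant counterparts. The one point of genuine novelty is that BFT-Turtle-Unanimity is weaker: it only yields $w \preceq u$ rather than $w \preceq d$, so I will route every agreement argument through the second component $u$ of a BFT-output and then appeal to the observation (obtained by applying BFT-Turtle-Agreement to a BFT-output paired with itself) that $d \preceq u$ always holds for any single BFT-output $\langle d, u, e_{d,u} \rangle$.

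For BFT-SMR-Agreement, I would consider two correct processors whose decisions $d$ and $d'$ arose from BFT-outputs in BFT-tree turtles $i$ and $j$, respectively. If $i = j$, BFT-Turtle-Agreement directly gives $d \preceq u'$, and since $d' \preceq u'$ holds as well, both $d$ and $d'$ are prefixes of the same chain $u'$, so $d \simeq d'$. If $i < j$ (the reverse case is symmetric), Lemma~\ref{lemma:bft_smr_safety_lemma} tells me that $d \preceq c$ for every BFT-input $\langle j, c, e_c \rangle$ to BFT-tree turtle $j$; applying BFT-Turtle-Unanimity with $w = d$ then yields $d \preceq u'$, and combined with $d' \preceq u'$ this again forces $d \simeq d'$. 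For BFT-SMR-Validity, I would unfold the BFT-output $\langle i, d, u, e_{d,u} \rangle$ that caused a correct processor to decide $d$ and invoke BFT-Turtle-Validity to obtain a BFT-input $\langle c, e_c \rangle$ with $u \preceq c$; that constitutes a proposal of $c$ by some processor, and $d \preceq u \preceq c$ closes the argument.

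For BFT-SMR-Monotonicity, I would lean on the explicit check built into the composition: a correct processor decides $d'$ only if $d'$ is longer than its longest previously decided chain $d$. BFT-SMR-Agreement, already established above, applied to two decisions by the same correct processor gives $d \simeq d'$, and the length comparison then pins this down to $d \preceq d'$. The main obstacle—and also the justification for why BFT-Turtle-Unanimity is allowed to be weaker than its crash-tolerant analogue—is precisely this shift: Monotonicity is no longer a free consequence of the turtle specification but must be enforced externally by the composition. Everything else is a careful reindexing of the crash-tolerant proof, with the BFT-flavored safety lemma (Lemma~\ref{lemma:bft_smr_safety_lemma}) and the self-application trick for BFT-Turtle-Agreement doing all of the real work.
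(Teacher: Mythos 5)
Your proposal is correct and follows essentially the same route as the paper: the same case split on turtle indices for BFT-SMR-Agreement (using Lemma~\ref{lemma:bft_smr_safety_lemma} plus BFT-Turtle-Unanimity routed through $u'$), the same unfolding via BFT-Turtle-Validity for BFT-SMR-Validity, and the same reliance on the ``only decide if longer'' check for BFT-SMR-Monotonicity. Your only deviation---invoking the already-established BFT-SMR-Agreement (which by its statement covers $p = p'$) instead of re-running the inductive safety argument for Monotonicity---is a harmless and slightly more economical packaging of the same idea.
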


\paragraph{BFT-SMR-Agreement:}  
Suppose that two correct processors $p$ and $p'$ decide chains
based on their BFT-outputs $\langle i, d, u, e_{d,u} \rangle$ 
and $\langle j, d', u', e_{d',u'} \rangle$, respectively.
First, suppose that $i = j$.
By BFT-Turtle-Agreement, $d \preceq u'$ and $d' \preceq u'$, 
and thus $d \preceq d'$ or $d' \preceq d$. 
Thus, the decided values agree.
Now suppose that $i < j$.
By Lemma~\ref{lemma:smr_safety_lemma}, we know that $d \preceq c$ for all
BFT-inputs $\langle j, c, e_{c} \rangle$ to BFT-tree turtle $j$. 
By BFT-Turtle-Unanimity, we have again that $d \preceq u'$, so it follows that $d$
and $d'$ agree.

\paragraph{BFT-SMR-Validity:}  
Suppose that a correct processor $p$
decides the chain $d$. This means that $p$ produces a BFT-output 
$\langle i, d, u, e_{d,u} \rangle$ for some BFT-tree turtle $i$, chain $u$, 
and evidence $e_{d,u}$. 
By BFT-Turtle-Validity, there must exist a BFT-input 
$\langle i, c, e_{c} \rangle$ to the same tree turtle such that $u \preceq c$. 
So, $c$ was proposed by some processor.
And since we must have $d \preceq u$, we know $d \preceq c$. 

\paragraph{BFT-SMR-Monotonicity:}
If a correct processor decides $d$ as a result of its BFT-output of BFT-tree turtle $i$,
then by Lemma~\ref{lemma:smr_safety_lemma} and BFT-Turtle-Unanimity,
an inductive argument shows that any BFT-output that the processor produces
for a later BFT-tree turtle will have a chain $d'$ such that $d \preceq d'$
or $d' \preceq d$. Since correct processors will only decide a new chain
if it is an extension of its longest decided chain, it follows
that the processor's decisions are extended monotonically.

\hfill\qedsymbol

\subsection{BFT One-Step Tree Turtle}

The One-Step Tree Turtle (Section~\ref{sec:onestep_tt}) can be made into
a BFT-tree turtle with the following modifications:
\begin{itemize}
\item The quorum system $\mathcal{Q}$ must satisfy 5-intersection instead
of 3-intersection (meeting the lower bound for quorum
intersection of BFT one-step protocols~\cite{SvR08});
\item In step 1a., processor $p$ broadcasts  
$\langle \langle c, e_{c} \rangle, \dsig{c}{p} \rangle$,
i.e., its BFT-input and a digital signature for $c$;
\item In step 1b., processor $p$ must ensure that the received messages 
contain valid digital signatures and BFT-inputs $\langle c_s, e_{c_s} \rangle$,
and for any that do not, $p$ discards the message;
\item In step 1c., processor $p$ constructs the evidence $e_{d,u}$ 
as set of tuples $\{ \langle c_s, \dsig{c_s}{s} \rangle ~|~ s \in Q_p\}$
and includes it in its BFT-output $\langle d, u, e_{d,u} \rangle$;
\item A BFT-output $\langle d, u, e_{d,u} \rangle$ must contain evidence
$e_{d,u} = \{ \langle c_s, \dsig{c_s}{s} \rangle ~|~ s \in Q\}$
for some quorum $Q$ such that the chains $c_s$ allow $d$ and $u$ to be
recomputed according to the protocol at step 1c.
\end{itemize}

The proof of correctness for the BFT One-Step Tree Turtle follows
analogous reasoning to that given for the crash tolerant version.

\begin{lemma} \label{lemma:bft_onestep_totally_ordered}
All elements of $C_p$ agree. 
\end{lemma}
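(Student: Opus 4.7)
The plan is to imitate the proof of Lemma~\ref{lemma:onestep_totally_ordered} almost verbatim. Fix two arbitrary elements $x, x' \in C_p$, arising from quorums $Q, Q'$ respectively, so that $x = \bigsqcap_{s \in Q_p \cap Q} c_s$ and $x' = \bigsqcap_{s \in Q_p \cap Q'} c_s$. Since the BFT variant assumes $5$-intersection on $\mathcal{Q}$---which is strictly stronger than the $3$-intersection used in the crash-tolerant version---the set $Q_p \cap Q \cap Q'$ is certainly non-empty. Pick any $r$ in this intersection and let $c_r$ denote the chain $p$ accepted from $r$ in step 1b. Then $x \preceq c_r$ and $x' \preceq c_r$ by the definition of $\sqcap$, and because both $x$ and $x'$ are prefixes of the same chain $c_r$ they must agree.

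The only subtlety worth highlighting is Byzantine equivocation: a faulty $r$ could in principle send different chains to different processors. But the lemma speaks only about a single processor's $C_p$, and in step 1b processor $p$ validates the signature and stores exactly one chain $c_r$ for $r$. So from $p$'s local view there is a unique, well-defined $c_r$, and the meet-semilattice reasoning goes through unchanged. In particular, for this lemma the new digital-signature apparatus is only used to guarantee that each $c_s$ is well-formed; the $5$-intersection itself is strictly more than what the lemma requires, and the extra intersection will be consumed later to establish BFT-Turtle-Agreement across \emph{different} processors' outputs rather than internal consistency within one.

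I expect no real obstacle for this lemma. The one thing to be careful about in writing the proof is to phrase everything in terms of what $p$ actually received and validated, rather than in terms of intrinsic proposals of the senders, since a Byzantine $r$ has no canonical proposal to reference. Once that is done, the argument is a direct transcription of the crash-tolerant case.
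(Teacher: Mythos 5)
There is a genuine gap, and it is the definition of $C_p$. You carried over the crash-tolerant definition $C_p = \{\bigsqcap_{s \in Q_p \cap Q} c_s \mid Q \in \mathcal{Q}\}$, but in the BFT version the paper (implicitly, in its proof of this lemma and in the subsequent BFT-Turtle-Agreement argument) enlarges $C_p$ to contain meets over intersections of $Q_p$ with \emph{two} other quorums, i.e.\ elements of the form $\bigsqcap_{s \in Q_p \cap Q_1 \cap Q_2} c_s$. The reason is visible downstream: to prove BFT-Turtle-Agreement, two processors $p$ and $p'$ can only be guaranteed to have received identical signed chains from processors in $Q_p \cap Q_{p'} \cap Q^*$ (a Byzantine sender may equivocate, so one must restrict to correct senders), and the meet over that \emph{triple} intersection must be an element of $C_p$. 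With $C_p$ so enlarged, comparing two of its elements requires a nonempty intersection $Q_p \cap Q_1 \cap Q_2 \cap Q'_1 \cap Q'_2$ of five quorums --- which is exactly why the protocol demands $5$-intersection. Your claim that ``the $5$-intersection is strictly more than what the lemma requires'' and will only be ``consumed later'' is therefore backwards: this lemma is precisely where the full $5$-intersection is spent.

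Your argument is internally consistent for the set you defined (and your point about equivocation being harmless within a single processor's local view is correct), but it proves agreement for a strictly smaller set than the one the rest of the BFT correctness proof relies on; in particular, the chain $x = \bigsqcap_{s \in Q \cap Q' \cap Q^*} c_s$ used in the BFT-Turtle-Agreement proof need not lie in your $C_p$, so Lemma~\ref{lemma:bft_onestep_d_prefix_C} and the maximality of $u$ could not be applied to it. The fix is mechanical once the definition is corrected: take $x$ and $x'$ arising from quorum pairs $(Q_1, Q_2)$ and $(Q'_1, Q'_2)$, use $5$-intersection to find $r \in Q_p \cap Q_1 \cap Q_2 \cap Q'_1 \cap Q'_2$, and conclude $x \preceq c_r$ and $x' \preceq c_r$ as you did.
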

\begin{proof}
All of the elements in $C_p$ are computed by taking the
intersection of $Q_p$ with two other quorums.
Suppose $x, x'$ are elements of $C_p$, where 
$x = \bigsqcap_{s \in Q_p \cap Q_1 \cap Q_2} c_s$,
and $x' = \bigsqcap_{s \in Q_p \cap Q'_1 \cap Q'_2} c_s$,
for some quorums $Q_1, Q_2, Q'_1, Q'_2$.
Since the quorum system satisfies 5-intersection,
$Q_p \cap Q_1 \cap Q_2 \cap Q'_1 \cap Q'_2$
is non-empty.
Let $r$ be some processor in this intersection.
By the use of $\sqcap$, we have that $x \preceq c_r$ and $x' \preceq c_r$,
where $c_r$ is the chain from $r$ used to complete the protocol. 
This means that either $x \preceq x'$ or $x' \preceq x$. 
\end{proof}

\begin{lemma} \label{lemma:bft_onestep_d_prefix_C}
For all $x \in C_p$, $d \preceq x$.  
\end{lemma}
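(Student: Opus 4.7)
The plan is to mirror the proof of the crash-tolerant analogue (Lemma~\ref{lemma:onestep_d_prefix_C}) almost verbatim, since the only structural change in the BFT protocol is that elements of $C_p$ are now computed over intersections of $Q_p$ with \emph{two} additional quorums rather than one. The core fact I need is the antimonotonicity of the meet: if $S \subseteq T$ and all $c_s$ are drawn from the same collection of messages, then $\bigsqcap_{s \in T} c_s \preceq \bigsqcap_{s \in S} c_s$, because a longest common prefix over a larger set can only be shorter (or equal).

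First I would fix an arbitrary $x \in C_p$ and, by the revised definition used in the preceding lemma, write it as $x = \bigsqcap_{s \in Q_p \cap Q_1 \cap Q_2} c_s$ for some quorums $Q_1, Q_2 \in \mathcal{Q}$. Next I would observe the set containment $Q_p \cap Q_1 \cap Q_2 \subseteq Q_p$, and note that the chains $c_s$ appearing in the meet defining $x$ are a subfamily of those defining $d = \bigsqcap_{s \in Q_p} c_s$ (these are the validly signed inputs that $p$ collected in step 1b). Then, applying antimonotonicity of $\sqcap$, I conclude $d \preceq x$.

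The proof is short and essentially routine; I do not expect a real obstacle. The only thing to be slightly careful about is that in the BFT setting the $c_s$ values are extracted from signed messages, so I should emphasize that the \emph{same} message from each $s \in Q_p \cap Q_1 \cap Q_2$ is being used in both meets. Once that is noted, the set-containment argument goes through unchanged, and no use of the $5$-intersection property is needed here (that assumption was spent in Lemma~\ref{lemma:bft_onestep_totally_ordered} to ensure $C_p$ is totally ordered so that $u = \max(C_p)$ is well defined).
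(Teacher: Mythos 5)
Your proof is correct and follows the paper's argument exactly: both rest on the containment $Q_p \cap Q_1 \cap Q_2 \subseteq Q_p$ and the antimonotonicity of the longest common prefix under taking subsets. Your added remarks (that the same signed messages feed both meets, and that $5$-intersection is not needed here) are accurate clarifications but do not change the route.
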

\begin{proof}
Observe that $Q_p \cap Q_1 \cap Q_2 \subseteq Q_p$.
The longest common prefix over a subset of $Q_p$ is at least as long as
the longest common prefix over $Q_p$.
Thus, $d \preceq x$ for any $x \in C_p$. 
\end{proof}

\begin{theorem} \label{thm:bft_onestep}
The BFT One-Step Tree Turtle protocol satisfies the
BFT-tree turtle specification (Definition~\ref{def:bft_tt_spec}).
\end{theorem}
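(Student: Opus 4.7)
The plan is to mirror the crash-tolerant One-Step Tree Turtle proof, leveraging the just-established Lemmas~\ref{lemma:bft_onestep_totally_ordered} and~\ref{lemma:bft_onestep_d_prefix_C}. For BFT-Turtle-Termination, the argument is nearly identical to the crash-tolerant case: a correct processor $p$ only blocks at step 1b waiting for a quorum of signed, well-formed BFT-inputs, and since $Q^*$ consists entirely of correct processors whose signed broadcasts reliably arrive and pass $p$'s verification by unforgeability, $p$ reaches step 1c. For BFT-Turtle-Validity, the evidence $e_{d,u}$ mandated in a BFT-output exposes signed chains $\{\langle c_s, \dsig{c_s}{s}\rangle : s \in Q\}$ from which $u$ is recomputable as some $\bigsqcap_{s \in Q \cap Q_1 \cap Q_2} c_s$; picking any $r$ in that triple intersection yields, by unforgeability, an actual BFT-input $\langle c_r, e_{c_r}\rangle$ with $u \preceq c_r$.

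For BFT-Turtle-Agreement, I would follow the crash-tolerant template but exploit the special correct quorum $Q^*$ to sidestep equivocation. Given BFT-outputs $\langle d, u, e_{d,u}\rangle$ from $p$ anchored at quorum $Q_p$ and $\langle d', u', e_{d',u'}\rangle$ from $p'$ at $Q_{p'}$, let $x = \bigsqcap_{s \in Q_p \cap Q_{p'} \cap Q^*} c_s$. This same $x$ belongs to $C_p$ (take $Q_1 = Q_{p'}, Q_2 = Q^*$) and to $C_{p'}$ (take $Q'_1 = Q_p, Q'_2 = Q^*$), because every $s$ in that intersection is correct and so broadcast one unique signed chain $c_s$ observed identically by $p$ and $p'$. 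Then Lemma~\ref{lemma:bft_onestep_d_prefix_C} gives $d \preceq x$, maximality of $u'$ over $C_{p'}$ gives $x \preceq u'$, hence $d \preceq u'$; symmetry yields $d' \preceq u$. For BFT-Turtle-Unanimity, if $w \preceq c$ for every BFT-input, then every $c_s$ admitted at step 1b satisfies $w \preceq c_s$ (the signature check forces $c_s$ to be genuinely $s$'s BFT-input chain), so $w \preceq d$ by definition of $d$ as a $\sqcap$; Lemma~\ref{lemma:bft_onestep_d_prefix_C} then gives $d \preceq u$, yielding $w \preceq u$ as required by the weaker BFT specification.

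The hardest step is the Agreement argument, because the crash-tolerant intuition that any processor in a quorum intersection speaks with one voice fails under Byzantine conditions: a faulty $r \in Q_p \cap Q_{p'}$ could sign two distinct chains and equivocate, making a naive $\bigsqcap$ over $Q_p \cap Q_{p'}$ differ for $p$ and $p'$. The key technical move is to intersect additionally with $Q^*$ to pin down a set of genuinely correct processors whose signed chains are unambiguous, at the price of needing the BFT redefinition of $C_p$ (elements indexed by pairs of auxiliary quorums rather than one). The full force of 5-intersection is demanded not for agreement across processors but for Lemma~\ref{lemma:bft_onestep_totally_ordered}, where two elements of a single $C_p$ can draw on four distinct auxiliary quorums; the main technical care is then the bookkeeping of which quorums index $x$ in each of $C_p$ and $C_{p'}$ so that the $Q^*$ trick actually produces a common witness.
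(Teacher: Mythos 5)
Your Termination and Agreement arguments essentially coincide with the paper's: in particular, for Agreement the paper also intersects the two evidence quorums with $Q^*$ to obtain a single chain $x$ lying in both $C_p$ and $C_{p'}$, then applies Lemma~\ref{lemma:bft_onestep_d_prefix_C} and the maximality of $u'$. The identification of where 5-intersection is really needed (Lemma~\ref{lemma:bft_onestep_totally_ordered}, with four auxiliary quorums) is also right.

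There is, however, a genuine gap in your Unanimity argument, and a smaller one in Validity. You derive $w \preceq u$ by first proving the stronger claim $w \preceq d$, asserting that every $c_s$ admitted at step 1b is ``genuinely $s$'s BFT-input chain'' because of the signature check. Unforgeability constrains only \emph{correct} signers: a Byzantine $s$ can sign an arbitrary chain that is not the chain of any BFT-input, and the evidence-validity condition for a BFT-output only requires that the signed chains in $e_{d,u}$ allow $d$ and $u$ to be recomputed --- it does not require those chains to originate from BFT-inputs. Moreover, the spec quantifies over \emph{all} BFT-outputs, including ones assembled by a Byzantine processor that never performs your step-1b filtering; for such an output, $d = \bigsqcap_{s\in Q} c_s$ may fail to extend $w$. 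This is precisely why the paper weakened BFT-Turtle-Unanimity to conclude only $w \preceq u$: its proof restricts attention to the correct processors in $Q \cap Q^*$, whose signed chains must be BFT-input chains, forms $x = \bigsqcap_{s\in Q\cap Q'\cap Q^*} c_s \in C_p$ with $w\preceq x$, and concludes $w \preceq x \preceq u$ by maximality of $u$. Your route through $w\preceq d$ cannot be repaired without additional assumptions. The same oversight appears in your Validity argument: ``any $r$ in the triple intersection'' does not suffice, since a Byzantine $r$'s signed chain need not be a BFT-input chain; you must intersect with $Q^*$ to pick a correct $r$, as the paper does.
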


\paragraph{BFT-Turtle-Termination:} 
Suppose that all correct processors 
construct a BFT-input for the protocol. 
Consider any correct processor $p$. By construction of the protocol, 
$p$ will wait to receive messages from a quorum of processors at step 1b. 
This is the only point in the protocol where $p$ will wait.
Since there is at least one quorum $Q^*$ that consists entirely of correct processors, 
and the network reliably delivers messages between correct processors, 
$p$ will eventually receive messages from all processors in a quorum.
Furthermore, all messages sent by processors in $Q^*$ will contain
BFT-inputs and valid digital signatures, 
as correct processors do not send bad messages in step 1a.
Thus, $p$ will not discard any of these messages
and will be able to complete the turtle at step 1c.

\paragraph{BFT-Turtle-Agreement:} Suppose that two processors $p$ and $p'$ 
produce BFT-outputs $\langle d, u, e_{d,u} \rangle$ and 
$\langle d', u', e_{d',u'} \rangle$, respectively.
Let $Q$ and $Q'$ be the quorums used to construct $e_{d,u}$
and $e_{d',u'}$, respectively.
For all (correct) processors $r \in Q \cap Q' \cap Q^*$, 
both $e$ and $e'$ must contain the same digitally signed chain
which was sent by $r$:
$\langle c_r, \dsig{c_r}{r} \rangle$ .
This is because correct processors do not send different
messages to different processors at step 1a., so $r$ 
must have sent the same message to both $p$ and $p'$. 
Furthermore, the digital signatures of correct processors cannot be forged,
so $p$ and $p'$ must have used the signature that $r$ actually constructed in step 1a.
Thus, if the chains in $e$ are used in step 1c. to compute 
$x = \bigsqcap_{s \in Q \cap Q' \cap Q^*} c_s$
and the chains in $e'$ are used to compute 
$x' = \bigsqcap_{s \in Q \cap Q' \cap Q^*} c_s$,
we would have that $x = x'$. Thus $p$ would have
$x \in C_p$ and $p'$ would have $x \in C_{p'}$.
By Lemma~\ref{lemma:onestep_d_prefix_C}, if $e$ is used to compute
$d$, we would have that $d \preceq x$.
And, if $e'$ is used to compute $u'$, it must be that $x \preceq u'$. 
Therefore $d \preceq u'$, and an analogous argument shows that $d \preceq u'$.

\paragraph{BFT-Turtle-Unanimity:} Assume that $w \preceq c$ for 
all BFT-inputs $\langle c, e_{c} \rangle$ to the protocol. 
For any BFT-output $\langle d, u, e_{d,u} \rangle$,
let $Q$ be the quorum used to construct $e_{d,u}$.
Then for all $r \in Q \cap Q^*$, the digitally signed chain
$\langle c_r, \dsig{c_r}{r} \rangle$ in $e$ must have been 
produced by the correct processor $r$. Because of step 1a., 
correct processors only sign chains $c_r$ that are part of BFT-inputs
$\langle c_r, e_{c_r} \rangle$, meaning that $w \preceq c_r$.
Thus, taking any third quorum $Q'$, if $e$ is used to compute 
$x = \bigsqcap_{s \in {Q \cap Q' \cap Q^* }} c_s$, 
it follows that $w \preceq x$.
Therefore, by Lemma~\ref{lemma:bft_onestep_totally_ordered}, 
it must be that $x \preceq u$ which implies $w \preceq u$.

\paragraph{BFT-Turtle-Validity:} 
If a processor produces a BFT-output $\langle d, u, e_{d,u} \rangle$,
then by quorum intersection it must be that
$u \preceq c_r$ for some correct processor $r \in Q^*$,
where $\langle c_r, e_{c_r} \rangle$ is the BFT-input constructed 
by $r$. 

\hfill\qedsymbol

\subsection{BFT Lower-Bound Tree Turtle}

The Lower-Bound Tree Turtle (Section~\ref{sec:lowerbound_tt}) can be made into
a BFT-tree turtle with the following modifications:
\begin{itemize}
\item The quorum system $\mathcal{Q}$ must satisfy 3-intersection;
\item In step 1a., processor $p$ broadcasts  
$\langle \langle c, e_{c} \rangle, \dsig{c}{p} \rangle$,
i.e., its BFT-input and a digital signature for $c$;
\item In step 1b., processor $p$ must ensure that the received messages 
contain valid digital signatures and BFT-inputs $\langle c_s, e_{c_s} \rangle$,
and for any that do not, $p$ discards the message;
\item In step 2a., processor $p$ broadcasts
\newline
$\langle x, \dsig{x}{p}, \{ \langle c_s, \dsig{c_s}{s} \rangle ~|~ s \in Q^1_p\} \rangle$,
i.e., the computed chain $x$, a digital signature for $x$, and all of the chains
and digital sigatures that it used to complete step 1c. of the protocol;
\item In step 2b., processor $p$ must ensure that the received messages 
contain valid digitally signed chains such that $x_s$ can be 
recomputed according to the protocol at step 1c., 
and for any that do not, $p$ discards the message;
\item In step 2c., processor $p$ constructs the evidence $e_{d,u}$ 
as set of tuples $\{ \langle x_s, \dsig{x_s}{s} \rangle ~|~ s \in Q^2_p\}$
and includes it in its BFT-output $\langle d, u, e_{d,u} \rangle$;
\item A BFT-output $\langle d, u, e_{d,u} \rangle$ must contain evidence
$e_{d,u} = \{ \langle x_s, \dsig{c_s}{s} \rangle ~|~ s \in Q\}$
for some quorum $Q$ such that the chains $c_s$ allow $d$ and $u$ to be
recomputed according to the protocol at step 2c. \emph{and}
all of the chains $x_s$ agree.
\end{itemize}

The BFT Lower-Bound Tree Turtle requires an extra condition on the
digitally signed chains included in evidence for BFT-outputs
which is not necessary for the BFT One-Step Tree Turtle. 
This ensures that the minimum and maximum of those chains is
always well-defined. Further, this property
is always satisfied by the evidence constructed by a correct processor executing
the protocol due to the following:

\begin{lemma} \label{lemma:bft_lowerbound_agreement}
If processor $p$ broadcasts a message
\newline
$m = \langle x, \dsig{x}{p}, \{ \langle c_s, \dsig{c_s}{s} \rangle ~|~ s \in Q\} \rangle$
and $p'$ broadcasts a message
$m' =\langle x', \dsig{x'}{p'}, \{ \langle c_s, \dsig{c_s}{s} \rangle ~|~ s \in Q'\} \rangle$
at step 2a. such that the digitally signed chains included in 
$m$ and $m'$ allow $x$ and $x'$ to be recomputed, respectively, according to the protocol at step 1c.,
then $x$ and $x'$ agree.
\end{lemma}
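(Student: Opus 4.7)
The plan is to combine the 3-intersection property of the BFT quorum system with the unforgeability of digital signatures to identify a correct processor whose signed chain must appear identically inside both messages, and then use that shared signed chain to force $x$ and $x'$ to agree.

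First, I would locate a common correct witness. Since $Q$, $Q'$, and $Q^*$ are all quorums and the quorum system satisfies $3$-intersection, $Q \cap Q' \cap Q^*$ contains some processor $r$. Because $r \in Q^*$, $r$ is correct.

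Next, I would argue that both $p$ and $p'$ must have included the identical tuple $\langle c_r, \dsig{c_r}{r} \rangle$ in their step-2a messages. Because $r$ is correct, at step 1a it broadcasts only the single chain $c_r$ from its own BFT-input together with the signature $\dsig{c_r}{r}$; and since correct processors' signatures cannot be forged, any tuple $\langle c, \dsig{c}{r} \rangle$ that passes the validation in step 1b must satisfy $c = c_r$. Thus the contributions of $r$ to the sets $\{ \langle c_s, \dsig{c_s}{s} \rangle \mid s \in Q \}$ inside $m$ and $\{ \langle c_s, \dsig{c_s}{s} \rangle \mid s \in Q' \}$ inside $m'$ coincide.

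Finally, since the recomputation hypothesis gives $x = \bigsqcap_{s \in Q} c_s$ and $x' = \bigsqcap_{s \in Q'} c_s$, and since $c_r$ occurs as a term in both meets, I obtain $x \preceq c_r$ and $x' \preceq c_r$. Two chains that are both prefixes of a common chain must themselves agree, hence $x \simeq x'$. The main obstacle is controlling what Byzantine contributors to $Q$ and $Q'$ can do: distinct faulty processors in the intersection could supply $p$ and $p'$ with entirely different signed chains, so the argument must rely on a \emph{correct} witness whose signed proposal pins down a common term in both meets. This is precisely why $3$-intersection (rather than mere $2$-intersection) is required here, and why the signature check in step 1b is essential---without it, a Byzantine $r$'s apparent chain in $m$ could differ from that in $m'$.
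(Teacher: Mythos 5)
Your proof is correct and follows essentially the same route as the paper's: both arguments use $3$-intersection to find a correct processor $r \in Q \cap Q' \cap Q^*$, observe that unforgeability and the fact that $r$ broadcasts a single signed chain force the identical tuple $\langle c_r, \dsig{c_r}{r} \rangle$ to appear in both $m$ and $m'$, and conclude $x \preceq c_r$ and $x' \preceq c_r$, hence $x \simeq x'$. Your added remarks on why Byzantine contributors necessitate a \emph{correct} witness are a nice clarification but do not change the argument.
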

\begin{proof}
For any processor $r$ in the intersection $Q \cap Q' \cap Q^*$,
both $m$ and $m'$ will contain the same digitally signed chain
$\langle c_r, \dsig{c_r}{r} \rangle$ from $r$.
This is because correct processors only send a single message to all
processors at step 1b. of the protocol.
Thus both $x$ and $x'$ must be prefixes of $c_r$ and therefore agree with each other.
\end{proof}

\begin{lemma} \label{lemma:bft_lowerbound_unanimity}
For any chain $w$ such that $w \preceq c$ for
all BFT-inputs $\langle c, e_{c}\rangle$,
then if processor $p$ that broadcasts
a message $m = \langle x, \dsig{x}{p}, \{ \langle c_s, \dsig{c_s}{s} \rangle ~|~ s \in Q\} \rangle$
at step 2a., such that the digitally signed chains included in 
$m$ allow $x$ to be recomputed according to the protocol at step 1c.,
then $w$ and $x$ agree. Further, if $p$ is correct, then $w \preceq x$.
\end{lemma}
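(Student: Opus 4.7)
The plan is to prove the two claims separately by leveraging quorum intersection with $Q^*$ together with the unforgeability of correct processors' signatures. The key observation is that in this protocol, a correct processor only signs a chain $c$ at step 1a, and that chain is necessarily the chain in its own BFT-input. This lets us reason about chains forwarded inside evidence, even when the processor producing the evidence is Byzantine.

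For the first claim, I would proceed as follows. The message $m$ fixes a quorum $Q$, and by 3-intersection (which implies 2-intersection), $Q \cap Q^*$ is nonempty. Pick a correct processor $r$ in this intersection. Because $m$ contains $\langle c_r, \dsig{c_r}{r} \rangle$ and $r$'s signatures cannot be forged, $r$ actually produced this signature; by inspection of the protocol, the only chain a correct processor signs in this way is the chain of its own BFT-input. Hence $c_r$ is the chain from $r$'s BFT-input, and the hypothesis gives $w \preceq c_r$. On the other hand, since the chains in $m$ allow $x$ to be recomputed per step 1c, we have $x = \bigsqcap_{s \in Q} c_s$, which in particular gives $x \preceq c_r$. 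Both $w$ and $x$ are thus prefixes of $c_r$, so they agree.

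For the second claim, assume further that $p$ is correct. Then $p$ executed step 1b faithfully and discarded any incoming message that did not contain a valid digital signature and a valid BFT-input $\langle c_s, e_{c_s}\rangle$. Consequently, every $c_s$ used to compute $x$ comes from a BFT-input, and the hypothesis of the lemma yields $w \preceq c_s$ for each $s \in Q^1_p$. Since $w$ is a common prefix of all these chains, it is a prefix of their longest common prefix: $w \preceq x$.

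The main obstacle is the first claim, because $p$ may be Byzantine and could have assembled its set of signed chains dishonestly. The crucial step that overcomes this is identifying a correct signer $r \in Q \cap Q^*$ and using the fact that correct processors only ever sign chains they have adopted as their own BFT-input in step 1a; this pins $c_r$ down to a legitimate BFT-input, at which point the longest-common-prefix computation does the rest.
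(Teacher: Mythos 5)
Your proposal is correct and follows essentially the same argument as the paper: both parts hinge on finding a correct processor $r \in Q \cap Q^*$ whose unforgeable signature pins $c_r$ to a genuine BFT-input (so $w \preceq c_r$ and $x \preceq c_r$, hence agreement), and on the fact that a correct $p$ only computes $x$ from validated BFT-inputs (so $w \preceq x$). Your version is slightly more explicit about the recomputation condition and the role of unforgeability, but the structure is identical.
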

\begin{proof}
Since $Q$ must contain at least one correct processor $r$ due to quorum intersection,
$m$ contains a digitally signed chain $\langle c_r, \dsig{c_r}{r} \rangle$ 
such that $w \preceq c_r$.
This is because correct processors will only produce such a signature 
in step 1a. for chains $c_r$ that are part of BFT-inputs $\langle c_r, e_{c_r} \rangle$.
Therefore $x \preceq c_r$ and so $x$ and $w$ agree.

If $p$ is correct, then $p$ will only use messages containing
BFT-inputs to compute $x$ at step 1c. of the protocol.
By assumption, $w$ is a prefix of the chains contained
in all such inputs.
\end{proof}

\begin{theorem} \label{thm:bft_lowerbound}
The BFT Lower-Bound Tree Turtle protocol satisfies the
BFT-tree turtle specification (Definition~\ref{def:bft_tt_spec}).
\end{theorem}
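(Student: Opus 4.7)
The plan is to verify each of the four clauses of Definition~\ref{def:bft_tt_spec} in turn, leaning on Lemmas~\ref{lemma:bft_lowerbound_agreement} and~\ref{lemma:bft_lowerbound_unanimity} and mirroring the structure of the crash-tolerant proof in Section~\ref{sec:lowerbound_tt}. The key new ingredients are 3-intersection (rather than 2) and the requirement that every signed chain packaged in a BFT-output both recompute to the advertised $x_s$ and pairwise agree; these two features turn the role of evidence from a bookkeeping device into a hard constraint I can exploit in every clause.

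\textbf{Termination and Validity} go through as in the BFT One-Step case. A correct processor only waits at steps 1b and 2b, and because every message sent by a processor in $Q^*$ carries a valid BFT-input (in round 1) or a valid step-1c justification of $x_s$ (in round 2), it eventually collects a quorum's worth of non-discarded messages. For Validity, the $u$ produced at step~2c equals $x_{s^*}$ for some $s^* \in Q$, and the evidence constraint forces $\langle x_{s^*}, \dsig{x_{s^*}}{s^*} \rangle$ to be accompanied by signed chains $\{c_r\}_{r \in Q^1_{s^*}}$ whose meet is $x_{s^*}$. Picking any correct $r \in Q^1_{s^*} \cap Q^*$, the chain $c_r$ must have come from a BFT-input $\langle c_r, e_{c_r} \rangle$, so $u = x_{s^*} \preceq c_r$.

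\textbf{Agreement} uses 3-intersection to supply a correct $r \in Q \cap Q' \cap Q^*$, where $Q$ and $Q'$ are the evidence quorums of the two BFT-outputs. Because $r$ is correct and signatures are unforgeable, both evidences must contain the identical tuple $\langle x_r, \dsig{x_r}{r} \rangle$, by Lemma~\ref{lemma:bft_lowerbound_agreement} applied to $r$'s unique round-2 broadcast. The evidence constraint makes each evidence's chain set totally ordered under $\preceq$, so $d = \min$ and $u' = \max$ satisfy $d \preceq x_r \preceq u'$, giving $d \preceq u'$; swapping the roles yields $d' \preceq u$. \textbf{Unanimity} is analogous: for any $w$ that is a prefix of every BFT-input, pick a correct $r \in Q \cap Q^*$ inside the evidence of a BFT-output; Lemma~\ref{lemma:bft_lowerbound_unanimity} gives $w \preceq x_r$, and then $x_r \preceq u$ (since $u$ is the maximum of an agreeing set) yields $w \preceq u$.

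The main obstacle I expect is reasoning cleanly about what a Byzantine processor can slip into a BFT-output: each claimed $x_s$ must itself be justified by signed chains from a quorum, so a Byzantine $s$ cannot pick $x_s$ arbitrarily but only as the meet of some quorum-indexed set in which at least one contributor is correct. The proof therefore pivots every clause on extracting a correct witness from the relevant intersection and invoking one of the two lemmas, rather than trying to control the full contents of the evidence directly.
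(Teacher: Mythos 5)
Your proof is correct and follows essentially the same route as the paper's: each clause is verified by extracting a correct witness from the relevant quorum intersection (using 3-intersection for Agreement, plain intersection with $Q^*$ for Unanimity and Validity) and then invoking Lemmas~\ref{lemma:bft_lowerbound_agreement} and~\ref{lemma:bft_lowerbound_unanimity} together with the min/max structure of the evidence. You even fill in the Validity argument explicitly, which the paper only sketches as ``analogous'' to the One-Step case, and your use of $Q \cap Q' \cap Q^*$ in Agreement is slightly more careful than the paper's phrasing.
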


\paragraph{BFT-Turtle-Agreement:} 
Suppose that two processors $p$ and $p'$ 
produce BFT-outputs $\langle d, u, e_{d,u} \rangle$
and $\langle d', u', e_{d', u'} \rangle$, respectively.
Let $Q$ and $Q'$ 
be the quorums over which the 
evidence $e_{d,u}$ and $e_{d',u'}$ are constructed,
respectively. Then, there must be a correct
processor $r$ in the intersection $Q \cap Q'$,
and both $e$ and $e'$ must include the chain 
$x_r$ broadcast by $r$ in step 2b.
By the minimality of $d$, it must be that $d \preceq x_r$.
And by the maximality of $u'$, we have that $x_r \preceq u'$.
Therefore, $d \preceq u'$, and an analogous argument shows $d' \preceq u$.

\paragraph{BFT-Turtle-Unanimity:} 
Suppose that $w \preceq c$ for all BFT-inputs $\langle c, e_c \rangle$,
and that processor $p$ produces BFT-output $\langle d, u, e_{d,u} \rangle$.
Letting $Q$ be the quorum over which $e_{d,u}$ is constructed,
quorum intersection ensures that there is at least one correct processor $r$ in $Q$.
By Lemma \ref{lemma:lowerbound_unanimity}, if $r$ broadcasts 
a message including the chain $x_r$ in step 2b, then $w \preceq x_r$. 
Further, $x_r$ must be equal to, or a prefix of, the maximum chain in $e_{d,u}$. 
Thus, $w \preceq x_r \preceq u$.

\paragraph{Turtle-Termination \& Turtle-Validity:} the proofs are
analogous to those given for Theorem~\ref{thm:bft_onestep}.

\hfill\qedsymbol

\section{Related Work}

Abstracting the problem of achieving consensus on a single value to 
entire sequences was previously applied to Paxos in Generalized Paxos~\cite{GENPAXOS}. 
This work further generalizes chains to partial-orders of values in which 
non-interfering values commute. The HotStuff protocol~\cite{HotStuff}
conceptualizes the state machine history as a tree, but it only extends the
tree by a single node at a time. The same can be said of blockchain protocols.

The idea of heterogeneous SMR protocols is similar to protocols which 
have different ``modes''. Typically there is one mode which is considered 
the normal operation of the protocol and 
another designed for fast-tracking decisions under best-case conditions. 
One such example is Fast Paxos~\cite{FASTPAXOS} 
which is able to skip a round of communication in periods without contention.
These modes, however, are usually considered to be part of a single protocol
instead of separate protocols satisfying a common specification.

\section{Conclusion \& Future Work}

SMR protocols have been around for over
thirty years. We revisit the structure of these 
protocols and propose new abstractions---\emph{trees and turtles}---for
the design of modular SMR protocols.  

While this paper did not discuss the performance of tree turtle protocols,
we believe that they have potential to be performant through their ability
to drive long extensions to the state machine history in a few rounds of communication.
Future work could include an empirical analysis of their
performance. Further techniques
for optimization can also be investigated such as pipelining chains 
from different rounds of the protocol simultaneously.

There is work in progress by the authors 
on building formally verified consensus and SMR protocols
in both Dafny and Coq based on the presented abstractions
due to their simplicity and efficiency.
We also expect our abstractions to support different flavors of SMR,
such as ordered consensus and heterogeneous consensus,
by utilizing the additional structure of trees and
the flexibility of the tree turtle composition.
Further generalizations of trees and chains into more general
algebriac structures are also possible.

\begin{acks}
The authors would like to thank Pierre Sutra 
and the anonymous reviewers for their helpful suggestions.
\end{acks}

\bibliographystyle{ACM-Reference-Format}
\bibliography{references.bib}

\newpage

\section*{Appendix A: Threshold Quorum Systems}

For completeness, we include how a $k$-intersecting
quorum system can be implemented (based on~\cite{JM05}).
A \emph{threshold quorum system} constructs 
quorums from all subsets of $\mathcal{P}$ above a particular size.
Let $n = |\mathcal{P}|$ be the number of processors.
Let $f$ be the maximum number of processors that may be faulty.
We can construct a $k$-intersecting quorum system
for $k \ge 1$ such that $n > k \cdot f$
as follows: the quorums are all subsets of $\mathcal{P}$
that have at least $n-f$ processors. 
Then $Q^*$ is the quorum that contains the $n-f$ correct processors. 
In a $k$-intersecting quorum system implemented with a threshold quorum system, 
increasing the value for $k$ results in an increase in 
the number of correct processors that are assumed to exist in the system.

\end{document}